\NeedsTeXFormat{LaTeX2e}[1996/06/01]

\documentclass[11pt]{article}
\usepackage{fullpage}
\usepackage{natbib}

\usepackage{amsmath}
\usepackage{amsthm}
\usepackage{graphicx}
\usepackage{epsfig}
\usepackage{subfigure}
\usepackage[linktocpage=true,linkcolor=magenta,colorlinks,citecolor=blue]{hyperref}

\theoremstyle{plain}
\newtheorem{theorem}	 			{Theorem}[section]

\theoremstyle{definition}
 
\theoremstyle{remark} 

\usepackage{xspace}
\usepackage{amssymb}
\usepackage{nicefrac}
\usepackage{xcolor}

\title{Random-Order Models\footnote{Preprint of Chapter 11 in  \emph{Beyond the Worst-Case Analysis of Algorithms}, Cambridge University Press, 2020.}}
\author{Anupam Gupta\thanks{Computer Science Department, Carnegie Mellon University.  Email: \texttt{anupamg@cs.cmu.edu}. Supported in part by NSF award CCF-1907820.} \and Sahil Singla\thanks{Princeton University and Institute for Advanced Study.   Email: \texttt{singla@cs.princeton.edu}. Supported in part by the Schmidt Foundation.}}

\setcounter{tocdepth}{1}

\begin{document}

\maketitle


\newcommand{\junk}[1]{}
\newcommand{\ignore}[1]{}
\newcommand{\R}[0]{{\ensuremath{\mathbb{R}}}}
\newcommand{\N}[0]{{\ensuremath{\mathbb{N}}}}
\newcommand{\Z}[0]{{\ensuremath{\mathbb{Z}}}}
\newcommand{\E}[0]{{\ensuremath{\mathbb{E}}}}

\def\floor#1{\lfloor #1 \rfloor}
\def\ceil#1{\lceil #1 \rceil}
\def\seq#1{\langle #1 \rangle}
\def\set#1{\{ #1 \}}
\def\abs#1{\mathopen| #1 \mathclose|}   
\def\norm#1{\mathopen\| #1 \mathclose\|}
\def\cin{h} 

\newcommand{\poly}{\operatorname{poly}}
\newcommand{\argmin}{\operatorname{argmin}}
\newcommand{\polylog}{\operatorname{polylog}}

\newcommand{\sse}{\subseteq}

\newcommand{\calM}{{\mathcal{M}}}
\newcommand{\calF}{{\mathcal{F}}}
\newcommand{\Fstar}{{\mathscr{F}^\star}}
\newcommand{\bB}{{\mathbf{B}}}
\newcommand{\T}{{\mathcal{T}}}
\newcommand{\Tstar}{{T^\star}}

\newcommand{\e}{\varepsilon}
\newcommand{\eps}{\varepsilon}
\newcommand{\ts}{\textstyle}
\renewcommand{\sp}{{\hspace*{0.1 in}}}
\newcommand{\symdif}{\triangle}

\newcommand{\bs}[1]{\boldsymbol{#1}}
\newcommand{\pr}[1]{{\rm Pr} \left[ #1 \right]}
\newcommand{\ex}[1]{{\rm E} \left[ #1 \right]}

\newcommand{\Opt}{\ensuremath{\mathsf{opt}\xspace}}
\newcommand{\MST}{\ensuremath{\mathsf{mst}\xspace}}
\newcommand{\OPT}{\ensuremath{\mathsf{opt}\xspace}}

\newcommand{\cost}{\ensuremath{\mathsf{cost}\xspace}}

\renewcommand{\theequation}{\thesection.\arabic{equation}}
\renewcommand{\thefigure}{\thesection.\arabic{figure}}

\newcounter{note}[section]
\renewcommand{\thenote}{\thesection.\arabic{note}}
\newcommand{\agnote}[1]{\refstepcounter{note}$\ll${\bf Anupam~\thenote:}
  {\sf \color{blue} #1}$\gg$\marginpar{\tiny\bf AG~\thenote}}
\newcommand{\ssnote}[1]{\refstepcounter{note}$\ll${\bf Sahil~\thenote:}
  {\sf \color{green} #1}$\gg$\marginpar{\tiny\bf AK~\thenote}}
\newcommand{\alert}[1]{{\color{red}#1}}
\newcommand{\blew}[1]{{\color{blue}#1}}

\newcommand{\qedsymb}{\hfill{\rule{2mm}{2mm}}}
\renewenvironment{proof}{\begin{trivlist} \item[\hspace{\labelsep}{\bf
\noindent Proof.\/}] }{\qedsymb\end{trivlist}}%

\newenvironment{analysis}{\begin{trivlist} \item[\hspace{\labelsep}{\bf
\noindent Analysis.\/}] }{\qedsymb\end{trivlist}}%

\newenvironment{algo}{\vspace{0.07in} \noindent \begin{minipage}	{\textwidth}
\hrule\vspace{0.01in}\hrule \vspace{0.05in}}{\vspace{-0.07in}\hrule\vspace{0.01in}\hrule \vspace{0.07in}
\end{minipage}}%

\newcounter{myLISTctr}
\newcommand{\initOneLiners}{%
    \setlength{\itemsep}{0pt}
    \setlength{\parsep }{0pt}
    \setlength{\topsep }{0pt}
}
\newenvironment{OneLiners}[1][\ensuremath{\bullet}]
    {\begin{list}
        {#1}
        {\initOneLiners}}
    {\end{list}}

\newcommand{\squishlist}{
 \begin{list}{$\bullet$}
  { \setlength{\itemsep}{0pt}
     \setlength{\parsep}{3pt}
     \setlength{\topsep}{3pt}
     \setlength{\partopsep}{0pt}
     \setlength{\leftmargin}{1.5em}
     \setlength{\labelwidth}{1em}
     \setlength{\labelsep}{0.5em} } }

\newcommand{\squishend}{
  \end{list}  }

\newcommand{\gr}{\nabla}
\newcommand{\ip}[1]{\langle #1 \rangle}
\newcommand{\ba}{\mathbf{a}}

\newcommand{\saveforlong}[1]{}
\newcommand{\onlyforshort}[1]{#1}


\begin{abstract}
\medskip
  This chapter introduces the {\em random-order model} in online
  algorithms. In this model, the input is chosen by an adversary, then
  randomly permuted before being presented to the algorithm.  This
  reshuffling often  weakens the power of the adversary and
  allows for improved algorithmic guarantees. We show such improvements
  for two broad classes of problems: packing problems where we must pick
  a constrained set of items to maximize total value, and covering
  problems where we must satisfy given requirements at minimum total
  cost. We also discuss how random-order model relates to other stochastic models
  used for non-worst-case competitive analysis.
  \qedsymb
\end{abstract}

\medskip
\medskip

\tableofcontents

\newpage



\section{Motivation: Picking a Large Element}
\label{sec:intro}

Suppose we want to pick the maximum of a collection of $n$ numbers.
 At the beginning, we know this cardinality $n$, but nothing
about the range of numbers to arrive. We are then presented distinct
non-negative real numbers $v_1, v_2, \ldots, v_n$ one by one; upon
seeing a number $v_i$, we must either immediately pick it, or 
discard it forever. We can pick at most one number. The goal is to
maximize the expected value of the number we pick, where the expectation
is over any randomness in our algorithm.  We want this expected value to
be close to the maximum value
$v_{\max} := \max_{i \in \{1,2,\ldots, n\}} v_i$. Formally, we want to
minimize the \emph{competitive ratio}, which is defined as the ratio of
$v_{\max}$ to our expected value.  Note that this maximum $v_{\max}$ is independent
of the order in which the elements are presented, and is unknown to our
algorithm until all the numbers have been revealed.

If we use a deterministic algorithm,  our value can be arbitrarily
smaller than $v_{\max}$, even for $n=2$. Say the first number $v_1 = 1$. If
our deterministic algorithm picks $v_1$, the adversary can present
$v_2 = M \gg 1$; if it does not,
the adversary can
present $v_2 = \nicefrac{1}{M} \ll 1$. Either way, 
the adversary can make the competitive ratio as bad as it wants by making $M$ large.

Using a randomized algorithm helps only a little: a na\"{\i}ve
randomized strategy is to select a uniformly random position
$i \in \{1,\ldots, n\}$ up-front and pick the $i^{th}$ number
$v_i$. Since we pick each number with probability $1/n$, the expected
value is $\sum_i v_i/n \geq v_{\max}/n$. This turns out to be the best
we can do, as long the input sequence is controlled by an adversary and
the maximum value is much larger than the others. Indeed, one strategy
for the adversary is to choose a uniformly random index $j$, and present
the request sequence $1, M, M^2, \ldots, M^j, 0, 0, \ldots, 0$---a
rapidly ascending chain of $j$ numbers followed by worthless numbers. If
$M$ is very large, any good algorithm must pick the last number in the
ascending chain upon seeing it. But this is tantamount to guessing $j$,
and random guessing is the best an algorithm can do. (This intuition
can be made formal using Yao's minimax lemma.)





These bad examples show that the problem is hard for two reasons: the
first reason being the large range of the numbers involved, and the
second being the adversary's ability to carefully design these difficult
sequences.  Consider the following way to mitigate the latter effect:
what if the adversary chooses the $n$ numbers, but then the numbers are
shuffled and presented to the algorithm in a uniformly random order?
This random-order version of the problem above is commonly known as the
\emph{secretary problem}: the goal is to hire the best secretary (or at
least a fairly good one) if the candidates for the job appear in a
random order.

Somewhat surprisingly, randomly shuffling the numbers changes the
complexity of the problem drastically. Here is the elegant $50\%$-algorithm:

\begin{algo}
  \begin{enumerate}
  \item Reject the first $n/2$ numbers, and then 
  \item Pick the first number
      after that which is bigger than all the previous numbers  (if any).
  \end{enumerate}
\end{algo}
\begin{theorem} \label{thm:50Percent}
  The 50\%-algorithm gets an expected value of at least $v_{\max}/4$.
\end{theorem}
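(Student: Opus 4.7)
The plan is to show that the algorithm picks the maximum element $v_{\max}$ with probability at least $1/4$, which immediately implies the claimed bound since all values are non-negative and the algorithm picks at most one element:
\[
\E[\text{algorithm's value}] \;\geq\; v_{\max}\cdot \Pr[\text{algorithm picks } v_{\max}].
\]
So the whole task reduces to analyzing this single probability under the uniformly random arrival order.

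Directly computing $\Pr[\text{algorithm picks } v_{\max}]$ by conditioning on the position of $v_{\max}$ is possible but slightly messy, so I would instead isolate a clean \emph{sufficient} event that is easy to analyze. Let $v_{(2)}$ denote the second-largest of the $n$ numbers. I claim that the algorithm picks $v_{\max}$ whenever the following event $E$ holds:
\[
\text{$v_{\max}$ appears in the second half (positions $n/2+1,\ldots,n$), and $v_{(2)}$ appears in the first half (positions $1,\ldots,n/2$).}
\]
The justification is the key structural step: under $E$, at every position $t$ strictly between $n/2$ and the position of $v_{\max}$, the running maximum of $v_1,\ldots,v_t$ equals $v_{(2)}$, which sits in the first half; hence no such position triggers a pick. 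When $v_{\max}$ itself arrives, it beats everything before it and is picked.

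The only remaining step is to lower bound $\Pr[E]$. Because the arrival order is a uniformly random permutation, the pair of positions occupied by $v_{\max}$ and $v_{(2)}$ is uniform over ordered pairs of distinct positions, so
\[
\Pr[E] \;=\; \frac{(n/2)(n/2)}{n(n-1)} \;=\; \frac{n}{4(n-1)} \;\geq\; \frac14.
\]
Combining the three steps yields $\E[\text{value}] \geq v_{\max}/4$. (For odd $n$ one replaces $n/2$ by $\lfloor n/2 \rfloor$ in the algorithm and redoes the same count, which still gives a bound of at least $1/4$.)

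There is no real obstacle in this proof; the only conceptual step is spotting the sufficient condition $E$. A purely position-conditioning computation would instead give $\Pr[\text{pick } v_{\max}] = \tfrac{1}{n}\sum_{i=n/2+1}^{n} \tfrac{n/2}{i-1}$, which tends to $\tfrac12 \ln 2 \approx 0.347$ and is tight as $n\to\infty$, but is weaker than $1/4$ only in its presentation; the $v_{(2)}$-based argument above is what makes the clean bound $1/4$ fall out immediately.
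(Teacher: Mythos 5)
Your proof is correct and is essentially the paper's argument: you identify the same sufficient event (maximum in the second half, second-maximum in the first half), make the same structural observation that under this event the prefix-maximum before $v_{\max}$ arrives lies in the first half so nothing is picked early, and lower bound the event's probability by $\nicefrac14$. The only cosmetic difference is that the paper multiplies $\Pr[v_{\max}\text{ in 2nd half}]=\nicefrac12$ by a conditional bound of $\nicefrac12$ (citing positive correlation), whereas you compute $\Pr[E]=\frac{(n/2)^2}{n(n-1)}=\frac{n}{4(n-1)}\geq\nicefrac14$ exactly.
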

\begin{proof}
  Assume for simplicity all numbers are distinct.
  The algorithm definitely picks $v_{\max}$ if the highest number is
  in the second half of the random order (which happens with probability
  $\nicefrac12$), and also the second-highest number is in the first half
  (which, conditioned on the first event, happens with probability at
  least $\nicefrac12$, the two events being positively
  correlated). Hence, we get an expected value of at
  least $v_{\max}/4$. (We get a stronger guarantee: we pick the highest
  number $v_{\max}$ itself with probability at least $\nicefrac14$, but we will not explore this
  expected-value-versus-probability direction any further.)
\end{proof}


\subsection{The Model and a Discussion}

The secretary problem, with the lower bounds in the worst-case setting
and an elegant algorithm for the random-order model, highlights the fact
that sequential decision-making problems are often hard in the
worst-case not merely because the underlying \emph{set} of requests is
hard, but also because these requests are carefully woven into a
difficult-to-solve \emph{sequence}. In many situations where
there is no adversary, it may be reasonable to assume that the ordering
of the requests is benign, which leads us to the random-order model.
Indeed, one can view this as a \emph{semi-random} model from
Chapter~9, where
the input is first chosen by an adversary and then randomly perturbed
before being given to the algorithm.

Let us review the \emph{competitive analysis} model for worst-case
analysis of online
algorithms (also discussed in Chapter~24). Here, the adversary chooses a sequence
of requests and present them to the algorithm one by
one. The algorithm must take actions to serve a request before seeing
the next request, and it cannot change past decisions. The actions have
rewards, say, and the \emph{competitive ratio} is the optimal reward for
the sequence (in hindsight) divided by the algorithm's reward. (For
problems where we seek to minimize costs instead of maximize rewards,
the competitive ratio is the algorithm's cost divided by the optimal
cost.) Since the algorithm can never out-perform the optimal choice, the
competitive ratio is always at least $1$.

Now given any online problem, the \emph{random-order model} (henceforth the
\emph{RO model}) considers the setting where the adversary first chooses
a \emph{set} $S$ of requests (and not a sequence). The elements of this
set are then
presented to the algorithm in a uniformly random order. Formally, given
a set $S = \{r_1, r_2, \ldots, r_n\}$ of $n = |S|$ requests, we imagine
nature drawing a uniformly random permutation $\pi$ of
$\{1,\ldots, n\}$, and then defining the input sequence 
to be
$ r_{\pi(1)},  r_{\pi(2)}, \cdots, 
r_{\pi(n)}$. As before, the online algorithm sees these requests one by
one, and has to perform its irrevocable actions for $r_{\pi(i)}$ before
seeing $r_{\pi(i+1)}$.
The length $n$ of the input sequence may also be
revealed to the algorithm at the beginning, depending on the problem.
The competitive ratio (for maximization problems) is defined as the
ratio between the optimum value for $S$ and the expected value of
the algorithm, where the expectation is now taken over both the randomness
of the reshuffle $\pi$ and that of the algorithm. (Again, we use the convention
that the competitive ratio is at least one, and hence have to flip the
ratio for minimization problems.) 

A strength of the RO model is its simplicity, and that it captures other
commonly considered stochastic
input models.  Indeed, since the RO model does not assume the algorithm
has any prior knowledge of the underlying set of requests (except
perhaps the cardinality $n$), it captures situations where the input
sequence consists of independent and identically distributed (i.i.d.)
random draws from some fixed and unknown distribution. Reasoning
about the RO model avoids over-fitting the algorithm to any particular
properties of the distribution, and makes the algorithms more general
and robust by design. 

Another motivation for the RO model is
aesthetic and pragmatic: the simplicity of the model makes it a good
starting point for designing algorithms. If we want
to develop an online algorithm (or even an offline one) for some
algorithmic task, a good step is to first solve it in the RO model, and
then extend the result to the worst-case setting. This can be useful
either way: in the best case, we may succeed in getting an algorithm for  the worst-case setting
using the insights developed in the RO model. Else the extension may
be difficult, but still we know a good algorithm under the (mild?)
assumption of random-order arrivals.

Of course, the assumption of uniform random orderings may be
unreasonable in some settings, especially if the algorithm performs
poorly when the random-order assumption is violated. There have been
attempts to refine the model to require less randomness from the input
stream, while still getting better-than-worst-case performance. We
discuss some of these in \S\ref{sec:redRandom}, but much
remains to be done.


\subsection{Roadmap}

In \S\ref{sec:dynkin} we discuss an optimal algorithm for the
secretary problem. In \S\ref{sec:sec-ext} we give algorithms to choose
multiple items instead of just a single one, and other maximization
packing problems. In \S\ref{sec:comb} we discuss minimization problems.
In \S\ref{sec:others} we present some specializations and extensions
of the RO model.

\section{The Secretary Problem}
\label{sec:dynkin}

We saw the
$50\%$ algorithm 
based on the  idea of using the first half of the random order
sequence to compute a threshold that weeds out ``low'' values. 
This idea of choosing a good threshold will be a
recurring one in this chapter. 
The choice of waiting for half of the
sequence was for simplicity: a right choice is to wait for
$\nicefrac1e\approx 37\%$ fraction, which gives us the $37\%$-algorithm:

\begin{algo}
  \begin{enumerate}
  \item Reject the first $n/e$ numbers, and then
  \item Pick the first number
      after that (if any) which is bigger than all the previous numbers.
  \end{enumerate}
\end{algo}


\saveforlong{We will give two
proofs that this change increases the probability of picking the largest number
to $37\%$.  The first proof uses combinatorial arguments, whereas the
second is based on linear-programming.
}

(Although $n/e$ is not an integer,  rounding it to the nearest integer does
not  impact the guarantees substantively.)
Call a number a \emph{prefix-maximum} 
if it is the largest among the numbers revealed before it. Notice
being the maximum is a property of just the set of numbers, whereas
being a prefix-maximum is a property of the random sequence and the
current position. A \emph{wait-and-pick} algorithm is one that rejects
the first $m$ numbers, and then picks the first prefix-maximum
number.
  
\begin{theorem}
  \label{thm:secy} As $n \rightarrow \infty$, 
  the 37\%-algorithm picks the highest number with probability at least
  $\nicefrac1e$. Hence, it gets
  expected value at least $v_{\max}/e$. Moreover, $n/e$ is the optimal
  choice of $m$ among all wait-and-pick algorithms.
\end{theorem}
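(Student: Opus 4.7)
The plan is to compute the probability that the wait-and-pick algorithm with cutoff $m$ selects $v_{\max}$ exactly, and then optimize the resulting expression over $m$.

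First, I would condition on the position of $v_{\max}$ in the random permutation. For each $i \in \{m+1, \ldots, n\}$, the event that the algorithm picks $v_{\max}$ given that $v_{\max}$ sits at position $i$ is equivalent to the event that the largest element among the first $i-1$ positions lies in the first $m$ positions. Indeed, if this holds, then none of positions $m+1, \ldots, i-1$ is a prefix-maximum (they are all dominated by something in the first $m$ slots), so the algorithm discards them and then picks $v_{\max}$ at step $i$; conversely, if some element in positions $m+1, \ldots, i-1$ beats everything in the first $m$, that element is a prefix-maximum and is picked before $v_{\max}$ is ever seen. The key observation, which makes the conditional probability easy to evaluate, is that conditional on $v_{\max}$ being at position $i$, the remaining elements are placed uniformly at random in the first $i-1$ among all $n-1$ slots; hence the probability that the maximum of those $i-1$ slots lies in the first $m$ is exactly $m/(i-1)$.

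Combining these, the probability of picking $v_{\max}$ equals
\[
P(m) \;=\; \sum_{i=m+1}^{n} \frac{1}{n} \cdot \frac{m}{i-1} \;=\; \frac{m}{n} \sum_{j=m}^{n-1} \frac{1}{j}.
\]
I would then take the $n \to \infty$ limit with $m/n$ held fixed, using $\sum_{j=m}^{n-1} 1/j = \ln(n/m) + o(1)$, to obtain $P(m) \to (m/n)\ln(n/m)$. Plugging in $m = n/e$ yields $(1/e) \cdot 1 = 1/e$, proving the probability bound, and hence an expected value of at least $v_{\max}/e$ since we lower-bound by $v_{\max} \cdot \Pr[\text{pick } v_{\max}]$.

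For optimality among wait-and-pick strategies, I would treat $f(x) := x \ln(1/x)$ as a function of $x = m/n \in (0,1)$ and differentiate: $f'(x) = \ln(1/x) - 1$, which vanishes uniquely at $x = 1/e$ and is positive before, negative after, so $x = 1/e$ is the unique maximizer. A small extra remark will handle the fact that $n/e$ need not be an integer: rounding changes $m/n$ by $O(1/n)$, which perturbs $P(m)$ by $o(1)$, so the asymptotic bound is unaffected. The only delicate step is the conditional probability computation; once one sees clearly that ``algorithm picks $v_{\max}$'' is equivalent to ``the running max at time $i-1$ lies in the first $m$ slots,'' the remainder is a routine harmonic-sum estimate and a one-variable optimization.
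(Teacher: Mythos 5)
Your proposal is correct and follows essentially the same route as the paper: conditioning on the position of $v_{\max}$, noting that the algorithm succeeds iff the running maximum among the first $i-1$ slots lies in the first $m$ positions, summing to get $\frac{m}{n}(H_{n-1}-H_{m-1})$, and optimizing the limiting form $x\ln(1/x)$. You supply a bit more detail than the paper (the two-sided equivalence argument, the explicit derivative check for uniqueness, and the rounding remark), but the decomposition and key computation are identical.
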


\begin{proof} 
  If we 
  pick the first prefix-maximum after rejecting the first $m$ numbers, the probability we pick the maximum is
  \begin{align*} 
    & \sum_{t = m+1}^n \Pr[ v_t \text{ is max} ] \cdot
    \Pr[ \text{max among first $t-1$ numbers falls in first $m$ positions}]
    \\   
    &\stackrel{(\star)}{=} \sum_{t = m+1}^n \frac1n \cdot \frac{m}{t-1} \quad = \quad
    \frac{m}{n} \big(H_{n-1} - H_{m-1}\big),
  \end{align*}
  where $H_k = 1+\frac12+\frac13+ \ldots+\frac1k$ is the $k^{th}$
  harmonic number. The equality $(\star)$ uses the uniform random order.
  Now using the approximation $H_k \approx \ln k + 0.57$ for large $k$, we get the
  probability of picking the maximum is about $\frac{m}{n} \ln
  \frac{n-1}{m-1}$ when $m,n$ are large.
  This quantity has a maximum value of $1/e$ if we choose $m = n/e$.
\end{proof}

Next we show 
we can replace any strategy (in a
comparison-based model) with a wait-and-pick strategy
without decreasing
the probability of picking the maximum.

\begin{theorem}
  \label{thm:secyTight}
  The strategy that maximizes the probability of
  picking the highest number can be assumed to 
  be a wait-and-pick strategy. 
\end{theorem}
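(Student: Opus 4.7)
My plan is to reduce any optimal strategy to a threshold rule via dynamic programming. The first step is to argue, using the symmetry of the uniform arrival order, that the algorithm's decision at step $t$ depends on its observation only through whether $v_t$ is prefix-maximum. A non-prefix-maximum arrival cannot be the global maximum, so no reasonable strategy picks there. At a prefix-maximum position, the probability that $v_t$ is actually the global maximum is $(1/n)/(1/t) = t/n$, since $v_t$ is the global maximum with probability $1/n$ and prefix-maximum with probability $1/t$. Moreover, by exchangeability of the permutation on the remaining positions, the optimal continuation value after skipping step $t$ depends only on $t$ and not on the finer rank pattern observed among $v_1,\ldots,v_t$.

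These two facts justify a clean Bellman recursion. Let $q(t)$ be the maximum probability of eventually picking $v_{\max}$ given we have not picked upon reaching step $t$, with $q(n+1)=0$. Then
\[
q(t) \;=\; \frac{1}{t}\,\max\!\bigl(\tfrac{t}{n},\, q(t+1)\bigr) \;+\; \Bigl(1-\tfrac{1}{t}\Bigr)\,q(t+1),
\]
since the arrival is prefix-maximum with probability $1/t$ (in which case we take the larger of picking, valued $t/n$, and continuing, valued $q(t+1)$), and otherwise we must continue. The key claim is monotonicity of the stopping decision: if picking is optimal at some prefix-maximum time $t$, then picking is also optimal at every later prefix-maximum time. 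Suppose for contradiction that picking is optimal at $t$, i.e.\ $t/n \geq q(t+1)$, yet continuing is optimal at $t+1$, i.e.\ $q(t+2) \geq (t+1)/n$. Since we continue at $t+1$, the recursion forces $q(t+1)=q(t+2)$, hence
\[
q(t+1) \;=\; q(t+2) \;\geq\; \tfrac{t+1}{n} \;>\; \tfrac{t}{n} \;\geq\; q(t+1),
\]
a contradiction.

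Given monotonicity, the optimal policy is characterised by a single threshold $m$: reject every arrival at times $t \leq m$ and pick the first prefix-maximum arrival at a time $t>m$, which is exactly a wait-and-pick strategy. The main obstacle I expect is the first step---cleanly justifying that the full rank-pattern observation can be replaced, without loss, by the single bit ``is $v_t$ prefix-max?'' and that the continuation value is truly history-free. Once that symmetry reduction is in place, the Bellman recursion and the short monotonicity argument finish the proof.
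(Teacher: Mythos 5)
Your argument is correct and is essentially the dynamic-programming argument the paper gives: your $q(t+1)$ is the paper's $g(t)$, and both proofs rest on $\Pr[v_t \text{ is max}\mid v_t \text{ is Pmax}]=t/n$ being increasing in $t$ while the optimal continuation value is non-increasing, with the paper concluding the threshold structure directly from these two monotonicities and you deriving it via a short contradiction from the Bellman recursion -- a cosmetic difference. The ``history-free'' step you flag as the main obstacle is also glossed over in the paper's proof; the clean justification is the classical fact that the prefix-maximum indicators in a uniformly random permutation are mutually independent Bernoulli$(1/s)$ variables, so the past observations carry no information about the future.
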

\begin{proof} Think of yourself as a player trying to maximize the probability
of picking the maximum number. Clearly, you should reject the next number $v_i$
if  it is not prefix-maximum. Otherwise, you should pick $v_i$ only 
if  it is prefix-maximum 
and the probability of $v_i$ being the  maximum is more than the probability
of you picking the maximum in the remaining sequence. Let us calculate 
these probabilities.

We use Pmax to abbreviate ``prefix-maximum''. 
For position $i \in \{1, \ldots, n\}$,   define
 \begin{align*}
   f(i) &= \Pr[ v_i \text{ is max}\mid v_i \text{
          is Pmax} ] 
          \stackrel{(\star)}{=} \frac{\Pr[ v_i \text{ is max} ]
          }{\Pr[v_i \text{ is Pmax}]} \stackrel{(\star\star)}{=} \frac{1/n}{1/i} =
            \frac{i}n, 
 \end{align*}
 where equality $(\star)$ uses that the maximum is also a prefix-maximum,
 and $(\star\star)$ uses the uniform random ordering.  Note that $f(i)$
 increases with $i$.
 
 Now consider a problem where the numbers are again being revealed in a random
 order but we must reject the first $i$ 
 numbers. The goal is to still  maximize the probability of picking the highest of the $n$ numbers.  Let $g(i)$ denote the 
 probability that the optimal strategy for this problem picks the global maximum.

The function $g(i)$ must be a non-increasing function of $i$, else we could just ignore
the $(i + 1)^{st}$ number and set $g(i)$ to mimic the strategy for $g(i + 1)$. Moreover,
$f(i)$ is increasing. So from the discussion above, you should not pick a prefix-maximum 
number at any
position $i$ where $f(i) < g(i)$ since you can do better on the suffix. Moreover, when
$f(i) \geq g(i)$, you should pick $v_i$
if it is prefix-maximum, since it is
worse to wait. Therefore, the approach of waiting until $f$ becomes greater than $g$
and thereafter picking the first prefix-maximum is an optimal strategy. 
\end{proof}

Theorems~\ref{thm:secy} and~\ref{thm:secyTight} imply for $n \to \infty$
that no algorithm can pick the maximum with probability 
more than $1/e$. Since we placed no bounds on the number magnitudes, this can
 also be used to show that for any $\eps>0$, there exist an $n$
 and numbers $\{v_i\}_{i \in \{1,\ldots, n\}}$ 
 where every algorithm has expected value
at most $(1/e+\eps) \cdot \max_i v_i$.


  \saveforlong{
  A ``modern'' proof of this result uses linear-programming (LP). This
approach is robust and gives algorithms even when we add some kinds of
constraints to the problem. The approach involves writing down an LP
that captures some properties of any feasible solution, solving this LP
optimally, and converting this solution into an algorithm with success
probability equal to the objective value of this LP.

\begin{proof}
  Let us fix an optimal strategy. (By the proof above, we know what this
  strategy is, but let us ignore that for now.) We just assume that it
  does not pick any number that is not a local best, as such a
  number cannot be the global best either. 
  Let $p_i$ be the probability that this strategy picks the number at
  position $i$. Let $q_i$ be the probability that we pick the number at
  position $i$ \emph{conditioned on the number being a local best}. So
  $q_i = \frac{p_i}{1/i} = i\cdot p_i$.
  Now, the probability of picking the highest number is 
  \begin{align}
    & \ts\sum_i \Pr[ v_i \text{ is global best and we pick it} ]
    \notag \\
    &= \ts \sum_i \Pr[ v_i \text{ is global best} ] \cdot q_i
    \quad = \quad \sum_i \frac{1}{n} q_i \quad = \quad \sum_i \frac{i}{n} p_i. \label{eq:7}
  \end{align}
  What are the constraints? Clearly $p_i \in [0,1]$. But also
  \begin{align}
    p_i &= \Pr[ \text{ pick $v_i \mid v_i$ is local best} ] \cdot \Pr[
    v_i \text{ is local best} ] \notag \\
    &\leq \Pr[ \text{ did not pick any of } v_1, \ldots,v_{i-1} \mid v_i \text{
      is local best} ] \cdot (1/i). \label{eq:6}
  \end{align}
  Not picking the first $i-1$ numbers is independent of $i$ being
  local 
  best, so we can remove the conditioning to get 
  \begin{align*}
 	\textstyle{    p_i \leq \frac{1}{i} \cdot \big(1 - \sum_{j < i} p_j \big). } \label{eq:5} 
  \end{align*}
  Hence, the success probability of any strategy (and hence of the
  optimal strategy) is upper-bounded by the following LP in variables
  $p_i$:
  \begin{align*}
    \ts \max \quad \sum_i \frac{i}{n}\cdot p_i & \\
    \text{subject to  ~~} i\cdot p_i &\leq \ts 1 - \sum_{j < i} p_j \qquad \forall i \in \{1,
                 \ldots, n\}\\
    p_i &\in [0,1].
  \end{align*}
  It can be checked that the solution $p_i = 0$ for $i \leq \tau$
  and $p_i =\frac{\tau}{n}(\frac1{i-1} - \frac1i)$ for $\tau \leq i \leq
  n$ satisfies the constraints; here $\tau$ is defined by the smallest
  value such that $H_{n-1} - H_{\tau - 1} \leq 1$. (By duality, we can
  also show this is the optimal LP solution!)

  Here is how to convert this LP solution into an algorithm: if the
  current element $v_i$ is a local best, and we have not picked a number
  already, we pick $v_i$ with probability
  $\frac{i p_i}{1 - \sum_{j < i} p_j}$. (This expression lies in the
  interval $[0,1]$ by the LP constraint for $i$, so the algorithm is
  well-defined.)  Moreover, multiplying this expression with the
  probability that we have not picked an element earlier, and using the
  calculation in~(\ref{eq:6})) implies that we pick $v_i$ with
  probability $p_i$. Finally, a calculation similar to~(\ref{eq:7})
  shows that our algorithm's probability of picking the global best is
  $\sum_i ip_i/n$, which is the same as the LP value.
\end{proof}

As mentioned above, the LP approach is extendable and
robust, and it makes adding constraints easy. For example, note that the
optimal strategy does not pick anything in the first $n/e$ timesteps,
and then picks numbers with varying probabilities. If we imagine this
modeling people interviewing for a job, this gives candidates an
incentive to not come early in the order. To avoid this behavior, we may
want that for each position $i$, the probability of picking the item at
position $i$ is the same. By setting all $p_i = p$ and changing the
constraints slightly (e.g., we may now want to pick numbers even when
they are not the local best), this general approach extends to give an
algorithm picking $v_{\max}$ with probability
$1 - 1/\sqrt{2} \approx 29\%$.
}

\newcommand{\bp}{\mathbf{p}}

\section{Multiple-Secretary and Other Maximization Problems}
\label{sec:sec-ext}



We now extend our insights from the single-item case to
settings where we can pick multiple items. Each item has a value, and we
have constraints on what we can pick (e.g., we can  pick at most $k$
items, or pick any acyclic subset of 
edges of a graph). The goal is to maximize the total value. (We
study minimization problems in \S\ref{sec:comb}.)
Our algorithms can be broadly classified
as being \emph{order-oblivious} or \emph{order-adaptive}, depending on the
degree to which they rely on the random-order assumption.

\subsection{Order-Oblivious Algorithms}
\label{sec:order-obliv-algor}

The 50\%-strategy for the single-item secretary problem has an
interesting property: if each number is equally likely to lie in the
first or the second half, we pick $v_{\max}$ with
probability $\nicefrac14$ even if the arrival sequence within the
first and second halves is chosen by an adversary.  To formalize this
property, define an \emph{order-oblivious} algorithm as one with the
following two-phase structure: in the first phase (of some length
$m$) the algorithm gets a uniformly-random subset of $m$ items,
but is not allowed to pick any of these items. In the second phase, the
remaining items arrive in an adversarial order, and only now can the
algorithm pick items while respecting any constraints that 
exist. (E.g., in the secretary problem,  only one item may be picked.)
Clearly, any order-oblivious algorithm runs in the random-order model
with the same (or better) performance guarantee, and hence we can
focus our attention on designing such algorithms. Focusing on
order-oblivious algorithms has two
benefits. Firstly, such algorithms are easier to design and analyze,
which becomes crucial when the underlying constraints become more
difficult to reason about. Secondly, the guarantees of such algorithms can be 
interpreted as holding even for adversarial arrivals, as long as we have
offline access to some samples from the underlying distribution (discussed
in \S\ref{sec:others}).  To make things concrete, let us
start with the simplest generalization of the secretary problem.

\subsubsection{The Multiple-Secretary Problem: Picking $k$ Items}
\label{sec:k-items}

We now pick $k$ items to maximize the expected sum of their
values: the case $k=1$ is the secretary problem from the previous
section. We associate the items with the set $[n] = \{1, \ldots,
n\}$, with item $i \in [n]$ having value $v_i \in \R$; all values
are distinct.  Let $S^\star \sse [n]$ be the set of $k$ items of
largest value, and let the total value of the set $S^\star$ be
$V^\star := \sum_{i \in S^\star} v_i$.

It is easy to get an algorithm that gets expected value
$\Omega(V^\star)$, e.g., by splitting the input sequence of length $n$
into $k$ equal-sized portions and running the single-item algorithm
separately on each of these, or by setting threshold $\tau$ to be the
value of (say) the $\ceil{\nicefrac{k}3}^{th}$-highest value item in the
first $50\%$ of the items and picking the first $k$ items in the second
half whose values exceed $\tau$ (see
Exercise~\ref{exer:multItemConstant}). Since both these algorithms
ignore a constant fraction of the items, they lose at least a constant
factor of the optimal value in expectation. But 
we may hope to do better. Indeed, the $50\%$ algorithm obtains
a (noisy) estimate of the threshold between the maximum value item and
the rest, and then picks the first item above the threshold. The
simplest extension of this idea would be to estimate the threshold
between the top $k$ items, and the rest.  Since we are picking
$k \gg 1$ elements, we can hope to get accurate estimates of this
threshold by sampling a smaller fraction of the stream.

The following (order-oblivious) algorithm formalizes this intuition.
It gets an
expected value of $V^\star(1 - \delta)$, where $\delta \to 0$ as
$k \to \infty$. To achieve this performance, we get an accurate estimate of the
$k^{th}$ largest item in the entire sequence after  ignoring only
$\delta n$ items, and hence can start picking items much earlier.

\begin{algo}
  \begin{enumerate}
  \item Set $\eps = \delta = O\big( \frac{\log k}{k^{1/3}}\big)$.
  \item \emph{First phase:} ignore the first $\delta n$
    items. \\Threshold $\tau \gets$  
   value of the $(1-\eps)\delta k^{th}$-highest valued
  item in this ignored set.
  \item \emph{Second phase:} pick the first $k$ items seen that have value greater than $\tau$.
  \end{enumerate}
\end{algo}

\begin{theorem} \label{thm:kItemObiv} The  order-oblivious
  algorithm above for the multiple-secretary problem has expected value
  $V^\star (1 - O(\delta))$, where
  $\delta = O\big( \frac{\log k}{k^{1/3}}\big)$.
\end{theorem}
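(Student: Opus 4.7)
The plan is to argue that the threshold $\tau$ is, with overwhelming probability, a well-calibrated separator sitting just below $v_{(k)}$: specifically, I aim to show that the rank of $\tau$ in the full input lies in $((1-2\eps)k,\,k]$. In that regime, fewer than $k$ items of the whole input exceed $\tau$ and all of them come from the top-$k$, so the second-phase algorithm picks every one of them regardless of the adversarial order, and the only value lost is (i) top-$k$ items that happen to fall into the first phase, plus (ii) the handful of top-$k$ items whose rank slipped past $r := \mathrm{rank}(\tau)$.

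For the concentration step, I would introduce, for each rank $j$, the random variable $X_j$ counting the number of top-$j$ items in the first phase; this is hypergeometric with mean $\delta j$ because the first phase is a uniformly random subset of size $\delta n$. The crucial definitional equivalence is $\tau \ge v_{(j)}$ iff $X_j \ge (1-\eps)\delta k$. A multiplicative Chernoff bound for hypergeometric tails, applied at $j=k$ (a relative deviation $\eps$ below mean $\delta k$) and at $j=(1-2\eps)k$ (a relative deviation $\Theta(\eps)$ above mean $(1-2\eps)\delta k$), bounds each failure probability by $\exp(-\Omega(\eps^2 \delta k))$. With $\eps=\delta=\Theta(\log k/k^{1/3})$ we have $\eps^2 \delta k = \Theta(\log^3 k)$, so both failures are $k^{-\omega(1)}$; a union bound then gives the target event $\mathcal{G} := \{v_{((1-2\eps)k)} > \tau \ge v_{(k)}\}$.

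Conditional on $\mathcal{G}$, I would decompose the missed top-$k$ value into the two pieces above. Piece (i) is bounded unconditionally by $\delta V^\star$, since each top-$k$ item sits in the first phase with probability exactly $\delta$. Piece (ii) is at most $(k-r+1)\,v_{(r)}$; combining this with the elementary inequality $V^\star \ge \sum_{a=1}^{r} v_{(a)} \ge r\,v_{(r)}$ and the bound $r > (1-2\eps)k$ yields $(k-r+1)\,v_{(r)} \le \tfrac{2\eps k}{(1-2\eps)k}\,V^\star = O(\eps)\,V^\star$. Adding the negligible $V^\star \cdot \Pr[\mathcal{G}^c]$ contribution from the bad event produces an expected shortfall of $O(\delta)\,V^\star$ and hence expected picked value at least $(1-O(\delta))\,V^\star$, as required.

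The main obstacle is really the parameter calibration, which is essentially forced by three competing requirements: $\delta$ must be small (to keep (i) small), $\eps$ must be small (to keep (ii) small), and $\eps^2 \delta k$ must be large (so Chernoff pins $\tau$ inside the target window). Setting $\eps=\delta$ and solving $\eps^3 k = \Theta(\log^3 k)$ gives $\eps = \Theta(\log k/k^{1/3})$, precisely the setting used in the theorem. A secondary subtlety I would handle carefully is robustness to the adversarial second-phase order: it is exactly the upper bound $r \le k$ that prevents more than $k$ items from exceeding $\tau$ and thereby neutralises the adversary, so this half of the Chernoff argument is essential rather than cosmetic---without it the adversary could force us to spend picks on non-top items and discard genuinely high-value ones.
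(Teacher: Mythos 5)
Your argument is correct and follows the same route as the paper's proof: bound the loss from the ignored first phase by $\delta V^\star$ in expectation, use two Chernoff bounds to pin the threshold's global rank into the window $((1-2\eps)k, k]$ (the paper phrases this as $\tau$ landing between the smallest value $v'$ and the $(1-2\eps)k^{th}$ value $v''$ of $S^\star$), observe that the upper bound $\mathrm{rank}(\tau)\le k$ protects the budget against any second-phase order, and balance the three error sources to force $\eps=\delta=\Theta(\log k/k^{1/3})$. Your write-up is somewhat more explicit about the equivalence $\tau\ge v_{(j)}\Leftrightarrow X_j\ge(1-\eps)\delta k$ and about the rank-$r$ bound $(k-r+1)v_{(r)}=O(\eps)V^\star$, but these are refinements of the same argument rather than a different one.
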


\begin{proof}
  The $\delta n$ items ignored in the first phase contain
   in expectation $\delta k$ items from $S^\star$, so we lose expected value
  $\delta V^\star$. Now a natural threshold would be the
  $\delta k^{th}$-highest value item among the ignored items. To account
  for the variance in how many elements from $S^\star$ fall among the
  ignored elements, we  set a slightly 
  higher threshold of the $(1-\eps)\delta k^{th}$-highest value.

  Let $v' := \min_{i \in S^\star} v_i$ be the lowest value item we
  actually want to pick.  There are two failure modes for this
  algorithm: (i)~the threshold is too low if $\tau < v'$, as then we may
  pick low-valued items, and (ii)~the threshold is too high if fewer
  than $k - O(\delta k)$ items from $S^\star$ fall among the last
  $(1-\delta)n$ items that are greater than $\tau$.  Let us see why
  both these bad events happen rarely.

  \begin{itemize} 
  \item {\em Not too low}: For event~(i) to happen, fewer than
    $(1-\eps)\delta k$ items from $S^\star$ fall in the first $\delta n$
    locations: i.e., their number is less than $(1-\eps)$ times its
    expectation $\delta k$. This has probability at most
    $\exp(-\eps^2 \delta k)$ by Chernoff-Hoeffding concentration bound 
	(see the aside below). Notice if $\tau \geq v'$ then we never run out of
	budget $k$.
  \item {\em Not too high}: For event~(ii), let $v''$ be 
    the $(1-2\eps)k^{th}$-highest value in $S^\star$. We expect
    $(1- 2\eps)\delta k$  items above $v''$ to appear among the
    ignored items, so the probability that more than $(1-\eps) \delta k$
    appear is $\exp(-\eps^2 \delta k)$ by Chernoff-Hoeffding 
    concentration bound.  This
    means that $\tau \leq v''$ with high probability, and moreover most
    of the high-valued items appear in the second phase (where we will
    pick them whenever event~(i) does not happen, as we don't run 
    out of budget).
  \end{itemize}
  Finally, since we are allowed to lose $O(\delta V^\star)$ value,
  it suffices that the error probability $\exp(-\eps^2 \delta k)$
  be at most $O(\delta) = 1/\poly(k)$. This requires us to set $\eps^2
  \delta k = \Omega(\log k)$, and a good choice of parameters is $\eps = \delta = O\big( \frac{\log k}{k^{1/3}}\big)$.
\end{proof}

An aside: the familiar Chernoff-Hoeffding concentration bounds 
(Exercise~1.3(a) in Chapter~8) are for
sums of bounded \emph{independent} random variables, but the RO model has
 correlations (e.g., if one element from $S^\star$
falls in the first $\delta n$ locations, another is slightly
less likely to do so). The easiest fix to this issue is to ignore not
the first $\delta n$ items, but instead a random number of items with
the number drawn from a $\text{Binomial}(n,\delta)$ distribution with
expectation $\delta n$. In this case each item has probability $\delta$
of being ignored, independent of others.
A second way to achieve independence is to imagine each arrival 
happening at a  uniformly and independently chosen time in $[0,1]$. Algorithmically, we can
 sample $n$ i.i.d.\ times from $\text{Uniform}[0,1]$, sort them in 
increasing order, and assign the $i^{th}$ time to the $i^{th}$ arrival.
Now, rather than ignoring
the first $\delta n$ arrivals, we can ignore arrivals happening before 
time $\delta \in [0,1]$.
Finally, a third alternative is to not strive for independence, but 
instead directly 
use concentration bounds for sums of exchangeable random variables.
Each of these three alternatives offers different
benefits, and one alternative might be much easier
to analyse than the others, depending on
 the  problem at hand.

The loss of $\approx V^\star/k^{1/3}$ in Theorem~\ref{thm:kItemObiv}
is not optimal. We will see an
order-adaptive algorithm in the next section which achieves an expected value
of $V^\star \big(1 - O\big(\sqrt{\nicefrac{\log k}{k}}\big) \big)$. That algorithm
will not use a single threshold, instead  will adaptively refine its
threshold as it sees more of the
sequence.  
But first, let us discuss a few more order-oblivious
algorithms for other combinatorial constraints.


\subsubsection{Maximum-Weight Forest}
\label{sec:forest}

Suppose the items arriving in a random order are the $n$ edges of a
(multi-)graph $G=(V,E)$, with edge $e$ having a value/weight
$v_e$. The algorithm knows the graph at the beginning, but not the
weights. When the edge $e$ arrives, its weight $v_e$ is revealed, and we
decide whether to pick the edge or not. Our goal is to pick a subset of
edges with large total weight that form a forest (i.e., do not contain a cycle). The
target $V^\star$ is the total weight of a maximum-weight 
forest of the graph: offline, we can solve this problem using, e.g., 
Kruskal's greedy algorithm. This \emph{graphical secretary} problem generalizes
the secretary problem: Imagine a graph with two vertices and
$n$ parallel edges between them. Since any two edges
form a cycle, we can pick at most one edge, which models the
single-item problem.

As a first step towards an algorithm, suppose all the
edge values are either $0$ or $v$ (but we don't know in advance which
edges have what value). A greedy algorithm is to pick the next weight-$v$ edge whenever possible, i.e.,
when it does not create cycles with previously picked edges. This
returns
a max-weight
forest, because the optimal solution is a maximal forest among the subset of
weight-$v$ edges, and every maximal forest in a graph has the same
number of edges. 
This suggests
the following algorithm for general values: if we know some value $v$ for which there is a
subset of acyclic edges, each of value $v$, with total weight
$\geq \frac{1}{\alpha} \cdot V^\star$, then we can get an
$\alpha$-competitive solution by greedily picking value-$v$ edges
whenever possible.

How do we find such a value $v$ that gives a good approximation?
The Random-Threshold algorithm below uses
two techniques: \emph{bucketing} the values and \emph{(randomly) mixing}
a collection of algorithms.  We assume that all values are powers of
$2$; indeed, rounding values down to the closest power of $2$ loses at
most a factor of $2$ in the final guarantee. 

\begin{algo}
  \begin{enumerate}
  \item Ignore the first $n/2$ items and let $\hat{v}$ be their highest
    value.  
  \item Select a uniformly random $r \in \{0,\ldots, \log n\}$, and set
    threshold $\tau := \hat{v}/2^r$.
  \item For the second $n/2$ items, greedily pick any item of value at
    least $\tau$ that does not create a cycle.
  \end{enumerate}
\end{algo}

\begin{theorem}
  \label{thm:maxWtdForestLog}
  The  order-oblivious  Random-Threshold algorithm for the graphical secretary problem 
  gets  an expected value  $\Omega \big(\frac{V^\star}{\log n}\big)$.
\end{theorem}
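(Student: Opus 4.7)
The plan is to exploit the bucket-and-random-threshold structure baked into the algorithm, together with the forest matroid property. Since the theorem allows us to assume all edge weights are powers of $2$, I would write $v_{\max} = 2^M$ and define the weight-class bucket $B_j := \{e : v_e = 2^{M-j}\}$. Let $V^\star_j$ denote the value contributed to the offline optimum by $B_j$, so that $V^\star = \sum_j V^\star_j$. A first step is to check that the head buckets $j \in \{0, \ldots, \lfloor \log n \rfloor\}$ already carry a constant fraction of $V^\star$: edges in tail buckets ($j > \log n$) have value less than $v_{\max}/n$, and since a forest has at most $n-1$ edges, the tail contributes at most $v_{\max}$ in total; on the other hand, by matroid exchange $v_{\max}$ lies in every max-weight forest, so $V^\star_0 \geq v_{\max}$ and therefore $\sum_{j=0}^{\lfloor \log n \rfloor} V^\star_j \geq V^\star/2$.

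Next I would fix a single head bucket $j$ and condition on the two independent good events (i) $v_{\max}$ arrives in the first half (probability $1/2$), and (ii) $r = j$ (probability $1/(\log n + 1)$). Event (i) forces $\hat v = v_{\max}$, so jointly $\tau = v_{\max}/2^j = 2^{M-j}$, matching $B_j$-edges exactly. Let $F^\star_j$ be a max-weight forest using only $B_j$-edges, of size $k_j$; since the opt's $B_j$-edges themselves form a forest inside $B_j$, we have $V^\star_j \leq k_j \tau$. The crux will be that the algorithm's second-phase output $M_2$ is a \emph{maximal} forest in the subgraph $H$ of second-half edges of value $\geq \tau$ (the greedy skips an edge only when it forms a cycle with already-chosen edges, and such cycles persist). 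Writing $F^\star_{j,2}$ for the restriction of $F^\star_j$ to second-half arrivals, $F^\star_{j,2}$ is itself a forest inside $H$, so the forest matroid property yields $|M_2| \geq |F^\star_{j,2}|$. Combined with $\E[\,|F^\star_{j,2}|\,] \geq k_j/2 - O(1)$ (by symmetry of the random order conditional on the position of $v_{\max}$), the conditional expected algorithm value is at least $\tau \cdot \E[|F^\star_{j,2}|] = \Omega(V^\star_j)$.

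Finally, using independence of $r$ and the permutation,
\[
\E[\mathrm{ALG}] \;\geq\; \sum_{j=0}^{\lfloor \log n \rfloor} \Pr[r=j] \cdot \Pr[v_{\max}\in\text{first half}] \cdot \Omega(V^\star_j) \;\geq\; \Omega\!\left(\tfrac{1}{\log n}\right) \cdot \tfrac{V^\star}{2},
\]
giving the claimed $\Omega(V^\star / \log n)$ guarantee. The step I expect to be the main obstacle is the matroid comparison: I need to argue cleanly that the greedy's output really is \emph{maximal} in $H$ (and not merely some forest in $H$), since this is exactly what upgrades ``$F^\star_{j,2}$ is large in expectation'' into ``the algorithm collects at least as many $\tau$-valued edges in expectation.'' The mild correlation induced by conditioning on $v_{\max}$'s position costs only lower-order factors and can be handled either directly or via the Binomial-arrivals trick mentioned earlier in the chapter.
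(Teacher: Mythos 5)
Your decomposition into buckets, your observation that $V^\star_0 \geq v_{\max}$ (which cleanly shows the head buckets carry $\geq V^\star/2$, a slightly tidier route than the paper's argument by contradiction), and your matroid-basis comparison of the algorithm's maximal forest $M_2$ against $F^\star_{j,2}$ all match the paper's intended structure; the step you were worried about (maximality of the greedy's output in $H$) is indeed sound. The genuine gap is the one you wave off as a ``mild correlation'': conditioning $v_{\max}$ into the first half does not merely bias the order slightly, it removes $v_{\max}$ entirely from everything the algorithm is allowed to pick. If $v_{\max}$ is the unique top edge ($k_0 = 1$), then conditional on $\{r=0,\; v_{\max} \in \text{1st half}\}$ you have $\tau = v_{\max}$ and $H = \emptyset$, so $M_2 = \emptyset$ and your claimed per-bucket bound $\tau \cdot \E[|F^\star_{j,2}|] = \Omega(V^\star_j)$ is false at $j=0$: it equals $0$, whereas $V^\star_0 = v_{\max}$. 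Tracking this loss through your final sum leaves roughly $\Omega\big(\tfrac{1}{\log n}\big)\big(V^\star/4 - v_{\max}/2\big)$, which is vacuous whenever $v_{\max} \geq V^\star/2$.

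The paper closes this with a case split you need to add. If $v_{\max} \geq V^\star/4$, condition the \emph{other} way: with $r=0$, $v_{\max}$ in the second half, and the second-largest value in the first half, the Random-Threshold algorithm mimics the $50\%$-algorithm and collects $v_{\max}$ with constant probability, yielding $\Omega(v_{\max}/\log n) = \Omega(V^\star/\log n)$. Only when $v_{\max} < V^\star/4$ do you run your bucket argument with $v_{\max}$ conditioned into the first half; there the $O(v_{\max})$ loss is dominated by $V^\star/4$ and your bound survives. (One small wording nit: the threshold $\tau = v_{\max}/2^j$ admits all edges of value $\geq \tau$, i.e.\ $B_0 \cup \cdots \cup B_j$, not ``$B_j$-edges exactly''; but since you define $H$ as everything above $\tau$ and only need $F^\star_{j,2} \subseteq H$, this is a language slip rather than an error.)
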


\onlyforshort{
Here is the main proof idea: either most of the value is in a
single item (say $v_{\max}$), in which case when $r=0$ 
(with probability $1/\log n$) this mimics
 the $50\%$-algorithm. 
 Else, we can assume that $v_{\max}$ falls in the first half, giving
us a good estimate without much loss. Now, very little of $V^\star$ can
come from items of value less than $v_{\max}/n$ (since there are only
$n$ items). So we can focus on $\log n$ buckets of items whose values
lie in $[v_{\max}/2^{i+1}, v_{\max}/2^i)$. These buckets, on average,
contain value $V^*/\log n$ each, and hence picking a random one does well.
}

\saveforlong{
\begin{proof}
  What does the optimum solution $S^\star$ (with value $V^\star$)
  look like? Let $v_{\max}$ be the maximum value of any edge/item
  in $S^{\star}$. If $v_{\max} \geq V^\star/4$, we could run
  the single-item order-oblivious algorithm to get an expected value
  of $v_{\max}/4 = \Omega(V^\star)$.

  Else, define ``bucket'' $B_i$ to be edges of value exactly
  $v_{\max}/2^i$.  We claim that the items in $S^\star$ that lie within
  buckets $B_0, B_1, \ldots, B_{\log_2 n}$ have total value at least
  $\nicefrac{V^\star}{2}$. Indeed, if not, 
  at least $\nicefrac{V^\star}{2}$ of $S^\star$'s value must come from
  items of value less than $v_{\max}/2^{\log_2 n} = v_{\max}/n$. But
  there are at most $n-1$ edges in the forest $S^\star$. Even if each
  of those edges contributes  $v_{\max}/n < V^\star/(4n)$, we cannot get
  $V^\star/2$ value from them, a contradiction. Hence, we get that 
  \begin{equation}  \label{eq:maxWtForest}
  \textstyle{ \sum_{i=0}^{\log n} ({v_{\max}}/{2^i}) \cdot |S^\star \cap B_i| \quad \geq \quad
  {V^\star}/{2}.}
  \end{equation}
  Moreover, for any of these buckets $i$,
   we could run the algorithm that picks items of
  value exactly $v_{\max}/2^i$, and get a value of 
  ${v_{\max}}/{2^i} \cdot |S^\star \cap B_i|$.

  The difficulty is that we know neither $v_{\max}$, nor whether
  $v_{\max} \geq V^\star/4$, nor which of these buckets to choose. Hence
  the idea is to ``mix'' all the above algorithms by randomly choosing
  between them.
 
  It is easy to see that if $v_{\max} \geq V^\star/4$, the single-item
  algorithm does well. Else, condition on $v_{\max}$ is in the first
  half (which happens with probability $\nicefrac12$), so 
  $\hat{v} = v_{\max}$. Moreover, for any choice of
  $r=i \in \{0,\ldots, \log n\}$ (which happens with probability
  $\Theta(\nicefrac{1}{\log n})$), the second half contains in expectation half the
  elements of $S^\star \cap B_i$; now
  using~\eqref{eq:maxWtForest}, our algorithm gets value
  $\Omega(V^\star/\log n)$ in expectation. 
\end{proof}
}

\paragraph{An Improved Algorithm for Max-Weight Forests.}
The Random-Threshold algorithm above used relatively few properties of the max-weight forest.
Indeed, it extends to 
downward-closed set systems with the property that if all values are $0$ or $v$ then picking the next
 value-$v$ element whenever possible gives a near-optimal solution. 
However, we can do better using properties 
of the underlying graph. Here is a constant-competitive algorithm for graphical secretary 
where the main idea is to decompose the
problem into several disjoint single-item secretary problems. 

\begin{algo}
  \begin{enumerate}
  \item Choose a uniformly random permutation $\widehat{\pi}$
    of the vertices of the graph.
  \item For each edge $\{u,v\}$, direct it from $u$ to $v$ if
    $\widehat{\pi}(u)<\widehat{\pi}(v)$. 
  \item Independently for each vertex $u$, consider the edges directed
    \emph{towards} $u$ and run the order-oblivious $50\%$-algorithm  on these edges.
  \end{enumerate}
\end{algo}

\begin{theorem}
  \label{thm:maxWtdForestO1}
  The algorithm above for the graphical secretary problem is order-oblivious and gets
  an expected value at least $\nicefrac{V^\star}{8}$. 
\end{theorem}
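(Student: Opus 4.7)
The plan is to decompose the algorithm's expected value per vertex, apply the $50\%$-algorithm guarantee at each vertex, and then prove a combinatorial inequality that aggregates these per-vertex contributions to at least $V^\star$. First, the output is always a valid forest: the vertex permutation $\widehat{\pi}$ orients all edges into a DAG, and the algorithm selects at most one in-edge per vertex. Any undirected cycle $v_1, \ldots, v_k, v_1$ in the output would contribute exactly $k$ to the total in-degree restricted to cycle edges, forcing each cycle vertex to have in-degree one within the cycle, hence a directed cycle in the DAG -- a contradiction.

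Next, I would analyze the expected value per vertex. Conditioning on $\widehat{\pi}$, let $M_u$ denote the maximum weight of any edge directed toward $u$. Since the in-edges of $u$ appear in a uniformly random sub-order within the full random-order stream, Theorem~\ref{thm:50Percent} applied to $u$'s sub-stream yields expected value at least $M_u/4$ from the $50\%$-algorithm at $u$. To connect $M_u$ back to $V^\star$, let $e^*(u)$ be the heaviest $F^\star$-edge incident to $u$ (or the zero edge if none). Over the randomness of $\widehat{\pi}$, the edge $e^*(u)$ is oriented toward $u$ with probability exactly $1/2$, and in that case $M_u \geq w(e^*(u))$. Hence $\mathbb{E}[M_u] \geq w(e^*(u))/2$, and by linearity the algorithm's expected total value is at least $\tfrac{1}{8}\sum_u w(e^*(u))$.

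The main step, and the one I expect to be the real content of the proof, is the combinatorial claim that $\sum_u w(e^*(u)) \geq V^\star$ for any edge-weighted forest $F^\star$. To prove this, root each tree of $F^\star$ at an arbitrary vertex. Every non-root vertex $v$ then has a unique parent edge $p(v)$, and as $v$ ranges over the non-root vertices, $p(v)$ ranges exactly over the edge set of $F^\star$, giving $\sum_{v \text{ non-root}} w(p(v)) = V^\star$. Since $p(v)$ is one of the $F^\star$-edges incident to $v$, we have $w(e^*(v)) \geq w(p(v))$. Summing over non-root vertices, $\sum_u w(e^*(u)) \geq \sum_{v \text{ non-root}} w(e^*(v)) \geq V^\star$. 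Combined with the per-vertex bound, this gives expected total value at least $V^\star/8$.
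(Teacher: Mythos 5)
Your proof is correct and follows essentially the same approach as the paper's: decompose into per-vertex instances of the $50\%$-algorithm, observe that a root-directed orientation of $S^\star$ assigns each non-root vertex an incident $S^\star$-edge that points into it with probability $\nicefrac12$ over $\widehat{\pi}$, and conclude by linearity. The only cosmetic difference is that you route through $e^*(u)$ (the heaviest incident $S^\star$-edge) rather than using the parent edge $e(u)$ directly, and you verify the forest property via an in-degree count on a would-be cycle rather than by looking at the $\widehat{\pi}$-largest vertex on it; both are equivalent repackagings of the paper's argument.
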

\begin{proof}
  The algorithm picks a forest, i.e., there
  are no cycles (in the undirected sense) among the picked edges.
  Indeed, the highest numbered vertex (w.r.t. $\widehat{\pi}$) on any
  such cycle would have two or more  incoming edge picked, which is not
  possible.
    
  However, since we restrict to picking only one incoming edge per vertex, the
  optimal max-weight forest $S^\star$ may no longer be
  feasible. Despite this, we claim there is a 
  forest with the one-incoming-edge-per-vertex restriction, and expected value
  $V^\star/2$. (The randomness here is over the choice of the permutation
  $\widehat{\pi}$, but not of the random order.) Since the $50\%$-algorithm gets a quarter of this
  value (in expectation over the random  ordering), we get the
  desired bound of $V^\star/8$.

  \begin{figure}[h]
    \begin{center}
      \includegraphics[width=0.9\textwidth]{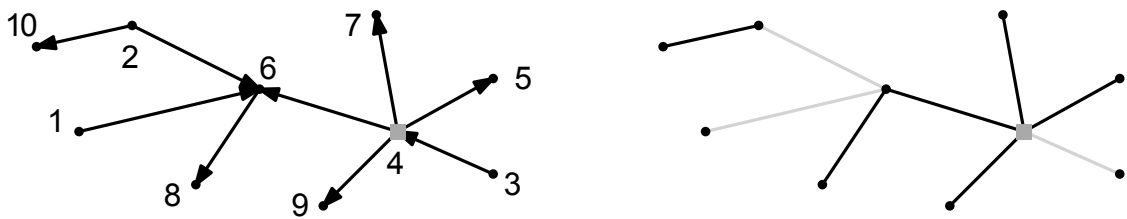}
      \caption{The optimal tree: the numbers on the left are those given
      by $\widehat{\pi}$. The grey box numbered $4$ is the root. The edges in the right are those retained in
      the claimed solution.}\label{fig:graphical}
    \end{center}
  \end{figure}

  To prove the claim, root each component of $S^\star$ at an
  arbitrary node, and associate each non-root vertex $u$ with the unique
  edge $e(u)$ of the undirected graph on the path towards the root. The proposed solution
  chooses for each vertex $u$, the edge $e(u) = \{u,v\}$ if
  $\widehat{\pi}(v) < \widehat{\pi}(u)$, i.e., if it is
  directed \emph{into} $u$ (Figure~\ref{fig:graphical}). Since this event happens
  with probability $1/2$, the proof follows by linearity of
  expectation. 
\end{proof}

  This algorithm is order-oblivious because the  $50\%$-algorithm has the property.
  If we don't care about order-obliviousness, we can instead use the 
  $37\%$-algorithm and get expected value at least $\nicefrac{V^\star}{2e}$.

\subsubsection{The Matroid Secretary Problem}

One of the most tantalizing generalizations of the secretary problem is
to \emph{matroids}. (A matroid defines a notion of \emph{independence}
for subsets of elements, generalizing linear-independence of a
collection of vectors in a vector space. E.g., if we define a subset of edges to be
independent if they are acyclic, these form a ``graphic'' matroid.)
Suppose the $n$ items form the ground set elements of a known matroid,
and we can pick only subsets of items that are independent 
in this matroid. The weight/value $V^\star$ of the max-weight independent set
can be computed offline by the obvious generalization of Kruskal's
greedy algorithm.
The open question is to get an expected value of $\Omega(V^\star)$ online in
the RO model. The approach from Theorem~\ref{thm:maxWtdForestLog} gives
expected value $\Omega(V^\star/\log k)$, where $k$ is the largest size of an
independent set (its \emph{rank}).  The current best algorithms
(which are also order-oblivious) achieve an expected value of
$\Omega(V^\star/\log \log k)$.  Moreover, we can obtain $\Omega(V^\star)$
for many special classes of matroids by exploiting their special
properties, like we did for the graphic matroid above; see the notes for
references.


\subsection{Order-Adaptive Algorithms} \label{sec:orderAdap}

The order-oblivious algorithms above have several benefits, but their
competitive ratios are often worse than  \emph{order-adaptive}
algorithms where we exploit the randomness of the entire arrival order. 
Let us revisit the problem of picking $k$ items.

\subsubsection{The Multiple-Secretary Problem Revisited}
\label{sec:multi-secy-II}

In the order-oblivious case of Section~\ref{sec:k-items}, we ignored the
first $\approx k^{-1/3}$ fraction of the items in the first phase, and
then chose a fixed threshold to use in the second phase. The length of
this initial phase was chosen to balance two competing concerns: we
wanted the first phase to be short, so that we ignore few items, but we
wanted it to be long enough to get a good estimate of the $k^{th}$
largest item in the entire input. The idea for the improved algorithm is to
run in multiple phases and 
use time-varying thresholds. Roughly, the algorithm uses the first 
$n_0 = \delta n$ arrivals to
learn a threshold for the next $n_0$ arrivals, then it computes a new
threshold at time $n_1 = 2 n_0$ for the next $n_1$ arrivals, and so on.

As in the order-oblivious algorithm, we aim for the
$(1-\eps)k^{th}$-largest element of $S^\star$---that $\eps$ gives us a
margin of safety so that we don't pick a threshold lower than the
$k^{th}$-largest (a.k.a.\ smallest) element of $S^\star$. But we vary
the value of $\eps$. In the beginning, we have low confidence,
so we pick items  cautiously (by setting a high $\eps_0$ and
creating a larger safety margin).  As we see more elements, we are more
confident in our estimates, and can decrease the $\eps_j$ values.

\begin{algo}
  \begin{enumerate}
  \item   Set $\delta := \sqrt{\frac{\log k}{k}}$.  Denote  $n_j := 2^j\delta n$
  and ignore first $n_0 = \delta n$ items.
  \item  For $j \in [0,\log \nicefrac1\delta]$, phase $j$ runs on 
  arrivals in \emph{window} $W_j := (n_{j}, n_{j+1}]$
 	 \begin{itemize}
  \item  
  Let $k_j := (k/n) n_j$ and   let $\eps_j := \sqrt{\delta/2^j}$.
  \item    Set   threshold $\tau_j$ to be the  $(1-\eps_j)k_j^{th}$-largest
  value among the first $n_j$ items. 
  \item  Choose
  any item in  window $W_j$ with value above $\tau_j$ (until budget $k$ is
  exhausted).
  	\end{itemize}
  \end{enumerate}
\end{algo}

\begin{theorem} \label{thm:kItemsAdap} The  order-adaptive 
algorithm above for the
  multiple-secretary problem has expected value
  $V^\star \big(1 - O\big(\sqrt{\frac{\log k}{k}}\big) \big)$. 
\end{theorem}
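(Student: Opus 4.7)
The plan is phase-by-phase. Denote $v^\star := \min_{i \in S^\star} v_i$ (the $k$-th largest value) and let $m_j := (1-2\eps_j)k$ be a slightly relaxed rank cutoff. The target invariant for each phase $j$, to be shown with high probability, is the two-sided sandwich
\[
v^\star \;\leq\; \tau_j \;\leq\; v_{(m_j)}.
\]
The lower bound guarantees the algorithm only picks $S^\star$-items, so the budget of $k$ is never exhausted. The upper bound guarantees that every top-$m_j$ item arriving in window $W_j$ is in fact picked. Everything outside ``a top-$m_j$ item that lands in $W_j$'' is excluded as a possible loss by this sandwich.

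The concentration is the heart of the argument. Letting $X_j$ be the number of $S^\star$-items in the first $n_j$ positions (mean $k_j = 2^j \delta k$), the event $\tau_j < v^\star$ is precisely $X_j < (1-\eps_j)k_j$; the event $\tau_j > v_{(m_j)}$ is an analogous upward deviation for the count of top-$m_j$ items, whose mean is $(1-2\eps_j)k_j$. Chernoff yields failure probability $\exp(-\Omega(\eps_j^2 k_j))$ in both cases, and the parameters are calibrated precisely so that $\eps_j^2 k_j = \delta^2 k = \log k$ uniformly in $j$; each phase thus fails with probability $1/\mathrm{poly}(k)$, and a union bound over the $O(\log k)$ phases keeps the total failure probability $1/\mathrm{poly}(k)$, absorbing into $V^\star/\mathrm{poly}(k)$ in the expected value. (The correlation caveat in the RO model is handled by one of the fixes noted after Theorem~\ref{thm:kItemObiv}.)

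On the good event the expected shortfall has two pieces. Items of $S^\star$ landing in the ignored prefix of length $\delta n$ account for at most $\delta V^\star$. For the remainder, order $S^\star$ as $v^\star_1 \geq \dots \geq v^\star_k$ and observe that $v^\star_r$ can only be missed in a phase $j$ with $m_j < r$, i.e., $\eps_j > (k-r)/(2k)$, which corresponds to $2^j < 4\delta k^2/(k-r)^2$. Summing window lengths $n_j/n = \delta\cdot 2^j$ over this range bounds the probability that $v^\star_r$ is missed by $\min\bigl(1,\, O(\delta^2 k^2/(k-r)^2)\bigr)$.

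The final step, and the main obstacle, is to show
\[
\sum_{r=1}^{k} v^\star_r \cdot \min\!\bigl(1,\, O(\delta^2 k^2/(k-r)^2)\bigr) \;=\; O(\delta V^\star),
\]
which is where the choice $\delta = \sqrt{\log k/k}$ is forced. I would split this sum into three regimes, exploiting the shape constraint $r\cdot v^\star_r \leq V^\star$. In the ``cap'' region $k - r \lesssim \delta k$ there are only $O(\delta k)$ items, each of value at most $2V^\star/k$ (bottom-half estimate), contributing $O(\delta V^\star)$. In the high-rank regime $r \leq k/2$ the multiplicative factor is uniformly $O(\delta^2)$, so that piece is $O(\delta^2 V^\star) \leq O(\delta V^\star)$. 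The middle regime uses $v^\star_r \leq 2V^\star/k$ together with the convergent tail $\sum_{s \geq \delta k} 1/s^2 = O(1/(\delta k))$ to again yield $O(\delta V^\star)$. Combined with the first-phase loss, the total expected shortfall is $O(\delta V^\star) = O\bigl(V^\star\sqrt{\log k/k}\bigr)$, as claimed.
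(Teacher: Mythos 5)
Your proof is correct and follows the same approach as the paper: establish the two-sided sandwich $v^\star \leq \tau_j \leq v_{(m_j)}$ via Chernoff bounds (calibrated so that $\eps_j^2 k_j = \delta^2 k = \Theta(\log k)$) and a union bound over the $O(\log(1/\delta))$ phases, then account for the expected shortfall on the good event. The only difference is in the final bookkeeping: the paper groups items by which window they must avoid and invokes the observation that the weighted loss is maximized when all top-$k$ values equal $V^\star/k$, whereas you derive a per-item miss probability $\min\bigl(1, O(\delta^2 k^2/(k-r)^2)\bigr)$ and bound its weighted sum by a three-regime split --- both routes yield $O(\delta V^\star)$.
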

\begin{proof}
\ignore{
  The algorithm runs in phases where  for 
  $j \in [0,\log \nicefrac1\delta]$ and $n_j := 2^j\delta n$, 
  Phase~$j$ runs on a \emph{window} $W_j := (n_{j}, n_{j+1}]$.
  Let $k_j := 2^j \delta k$ so that we expect to
  see $k_j$ items from the optimal set $S^\star$ in the first $n_j$
  items. Let $\eps_j := \sqrt{\delta/2^j}$. After seeing $n_j$ items, the
  threshold $\tau_j$ is the value of the $(1-\eps_j)k_j^{th}$-largest
  item among these $n_j$ items. The algorithm ignores the items
  before $n_0$, and then in each subsequent window $W_j$ it chooses
  any items with value at least $\tau_j$.
}

\ignore{
  Set $\delta := \sqrt{\frac{\log k}{k}}$ and $\eps := \delta/2$. For
  integers $j \in [0,\log \nicefrac1\delta]$, define
  $n_j := 2^j\delta n$ and $k_j := 2^j \delta k$ so that we expect to
  see $k_j$ items from the optimal set $S^\star$ in the first $n_j$
  items. Define the $j^{th}$ \emph{window} $W_j := (n_{j}, n_{j+1}]$,
  and let $\eps_j := \sqrt{\delta/2^j}$. After seeing $n_j$ items, the
  threshold $\tau_j$ is the value of the $(1-\eps_j)k_j^{th}$-largest
  item among these $n_j$ items. The algorithm ignores the items
  before $n_0$, and then in each subsequent window $W_j$ it chooses
  any items with value at least $\tau_j$.
}

  As in Theorem~\ref{thm:kItemObiv}, we first show that none of the
  thresholds $\tau_j$ are ``too low'' (so  we never run out of budget $k$). 
  Indeed, for $\tau_j$ to lie below $v' := \min_{i \in S^\star} v_i$,
   less than $(1-\eps_j)k_j$ items from $S^\star$ should fall in
  the first $n_j$ items. Since we expect $k_j$ of them, the probability of
  this is at most $\exp(-\eps_j^2 k_j) = \exp(-\delta^2 k) = 1/\poly(k)$.

  Next, we claim that $\tau_j$ is not ``too high'': it is
  with high probability at most the
  value of the $(1 - 2 \eps_j) k^{th}$ highest item in $S^\star$
  (thus all thresholds are at most $(1 - 2 \eps_0) k^{th}$ highest value).
 Indeed, we expect $(1- 2 \eps_j) k_j$ of these
  highest items
  to appear in the first $n_j$ arrivals, and the probability that more
  than $(1-\eps_j) k_j$ appear is $\exp(-\eps_j^2 k_j) = 1/\poly(k)$.
    
    Taking a union bound over all $j \in [0,\log \nicefrac1\delta]$, with high 
  probability  all thresholds are neither too high nor too low. Condition on
  this good event. 
  Now  any of the top $(1-2\eps_0)k$ items will be picked if it
  arrives after first the $n_0$ arrivals (since no threshold is too high
  and we never run out of budget $k$), 
  i.e., with probability $(1- \delta)$. Similarly, any item
  which is in top $(1-2 \eps_{j+1})k$,  but not in top $(1-2 \eps_{j})k$, will be 
  picked if it arrives after $n_{j+1}$, i.e., with probability $(1- 2^{j+1}\delta)$. Thus 
   if $v_{\max} = v_1 > \ldots > v_k$ are the top $k$ items, we get
  an expected value of 
    \begin{gather*}
   \sum_{i = 1}^{(1-2\eps_0)k} v_i (1-\delta) + \sum_{j = 0}^{\log
      \nicefrac1\delta -1} ~\sum_{i = (1-2\eps_j)k}^{(1-2\eps_{j+1})k} v_i
    (1-2^{j+1}\delta).
  \end{gather*}
  This is at least $V^\star(1-\delta) - \nicefrac{V^\star}k  \big( \sum_{j = 0}^{\log
      \nicefrac1\delta } 2\eps_{j+1}k \cdot 2^{j+1}\delta \big)$ because
       the negative terms are maximized when the
  top $k$ items are all equal to $\nicefrac{V^\star}k$. Simplifying, we get
  $V^\star(1- O(\delta))$, as claimed.
\end{proof}

The logarithmic term in $\delta$
can be removed (see the notes), but the loss of $\sqrt{k}$ is essential.
Here is a sketch of the lower bound. By Yao's minimax lemma, it suffices
to give a distribution over instances that causes a large loss for any
deterministic algorithm.
Suppose each item has value $0$ with
probability $1- \frac{k}{n}$, else it has value $1$ or $2$ with equal
probability. The number of non-zero items is therefore
$k \pm O(\sqrt{k})$ with high probability, with about half $1$'s and half
$2$'s, i.e., $V^\star = 3k/2 \pm O(\sqrt{k})$. Ideally, we want to pick all the $2$'s and then
fill the remaining $k/2 \pm O(\sqrt{k})$ slots using the $1$'s. However, consider
 the state of the algorithm after $n/2$ arrivals. Since the
algorithm doesn't know how many $2$'s will arrive in the second half, it
doesn't know how many $1$'s to pick in the first half. Hence, it will
either lose about $\Theta(\sqrt{k})$ $2$'s in the second half, or it will
pick $\Theta(\sqrt{k})$ too few $1$'s from the first half. Either way, the
algorithm will lose $\Omega(V^\star/\sqrt{k} )$ value.


\subsubsection{Solving Packing Integer Programs} \label{sec:LPsRandOrder}

The problem of picking a max-value subset of $k$ items 
can
be vastly generalized. Indeed, if each item $i$ has size
$a_i \in [0,1]$ and we can pick items having total size $k$, we get the knapsack
problem. More generally, suppose we have $k$
units each of $d$
different resources, and item $i$ is specified by the amount
$a_{ij} \in [0,1]$ of each resource $j \in \{1,\ldots,d\}$ it uses if
picked; we can pick any subset of items/vectors which can be supported by our
resources.
Formally, a set of $n$ different $d$-dimensional vectors
$\ba_1, \ba_2, \ldots, \ba_n \in [0,1]^d$ arrive in a random order, each
having an associated value $v_i$.  We can pick any
subset of items subject to the associated vectors summing to at most $k$ in each
coordinate.  (All vectors and values are initially unknown,
and on arrival of a vector we must irrevocably pick or
discard it.) We want to maximize the expected
value of the picked vectors. This gives 
a packing
integer program:
\begin{equation*}
\max \sum_i v_i x_i \quad \text{s.t.} \quad \sum_i x_i \ba_i
  \leq k \cdot \mathbf{1} \text{~~and~~} x_i \in \{0,1\}, \label{eq:packlp}
\end{equation*}
where the vectors arrive in a random order.  Let
$V^\star := \max_{\mathbf{x} \in \{0,1\}^d}\{ \mathbf{v}\cdot \mathbf{x} \mid A\mathbf{x} \leq k
\mathbf{1}\}$ be the optimal value, where $A \in [0,1]^{d\times n}$ has
columns $\ba_i$. The multiple-secretary problem is modeled by
$A$ having a single row of all ones.
By extending the approach from Theorem~\ref{thm:kItemsAdap}
considerably, one can achieve a competitive ratio of
$(1 - O(\sqrt{(\log d)/k}))$. In fact, several algorithms using
varying approaches are known, each giving the above competitive ratio.

We now sketch a weaker result.
To begin, we 
allow the variables to be fractional ($x_i \in [0,1]$) instead of
integers ($x_i \in \{0,1\}$). 
Since we assume the capacity $k$ is much larger than
$\log d$, we can use randomized rounding to go back 
from fractional to integer solutions with a little loss in value.
One of the key ideas is that 
learning a threshold
can be viewed as learning optimal dual values for this linear
program.

\begin{theorem}
  \label{thm:LPs}
  There exists an algorithm to solve packing LPs in the RO model to
  achieve expected value $V^\star \big(1 - O( \sqrt{\frac{d\log n}{k}}) \big)$.
\end{theorem}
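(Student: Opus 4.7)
The plan is to lift the doubling-window algorithm of Theorem~\ref{thm:kItemsAdap} from a scalar threshold to a $d$-dimensional vector of LP dual prices. The dual of the fractional relaxation has nonnegative variables $\mathbf{y}\in\R^d$ for the $d$ capacity rows and $\bs{\beta}\in\R^n$ for the box constraints, and by complementary slackness an optimal primal sets $x_i=1$ whenever $v_i>\mathbf{y}^\star\cdot\ba_i$ and $x_i=0$ whenever $v_i<\mathbf{y}^\star\cdot\ba_i$. So the optimal dual $\mathbf{y}^\star$ plays exactly the role that the scalar threshold $\tau$ played in the single-resource case.

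Algorithm: let $\delta = \Theta(\sqrt{(d\log n)/k})$, $n_j = 2^j\delta n$, $W_j=(n_j,n_{j+1}]$, and $\eps_j=\sqrt{\delta/2^j}$. Ignore the first $n_0$ arrivals. At the boundary between $W_{j-1}$ and $W_j$, solve the packing LP on the items seen so far with the capacity in each coordinate shrunk to $(1-\eps_j)(n_j/n)\,k$, and extract an optimal capacity-dual $\mathbf{y}^{(j)}$; this shrinking is the multi-dimensional analogue of picking the $(1-\eps_j)k_j^{th}$-largest value in Theorem~\ref{thm:kItemsAdap}. For each arrival $i\in W_j$, set $x_i=1$ if $v_i>\mathbf{y}^{(j)}\cdot\ba_i$ and no capacity would be violated, otherwise set $x_i=0$.

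The analysis then mirrors Theorem~\ref{thm:kItemsAdap}: with high probability, for each $j$, the dual $\mathbf{y}^{(j)}$ is simultaneously not ``too low'' (so that applying the rule on the remaining stream respects every budget $k$) and not ``too high'' (so that the items in the later windows satisfying $v_i>\mathbf{y}^{(j)}\cdot\ba_i$ capture essentially all of $V^\star$). Telescoping across the $O(\log(1/\delta))$ windows, using exactly the calculation from the scalar proof, produces total value $V^\star(1-O(\delta)) = V^\star\bigl(1-O(\sqrt{(d\log n)/k})\bigr)$.

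The main obstacle is the two-sided concentration step, which is genuinely harder here because $\mathbf{y}^{(j)}$ is itself a function of the random sample and lives in $\R^d$ rather than $\R$. For a \emph{fixed} dual $\mathbf{y}$, each coordinatewise sum $\sum_{i\in S}a_{ic}\mathbf{1}[v_i>\mathbf{y}\cdot\ba_i]$ on a random sample $S$ of size $n_j$ concentrates around its expectation with Chernoff error $O(\sqrt{(\log n)/k_j})$. To remove the data-dependence of $\mathbf{y}^{(j)}$, use the fact that every optimal basic dual of a packing LP with $d$ capacity constraints is pinned down by $O(d)$ tight items and therefore lies in a discrete set of size $n^{O(d)}$; union-bounding over this set and over the $d$ coordinates introduces precisely the $\sqrt{d\log n}$ factor that appears in the theorem. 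The residual negative correlation from sampling-without-replacement is handled by one of the three devices described after Theorem~\ref{thm:kItemObiv}. Balancing the ignored-prefix loss $\delta V^\star$ against the safety-margin loss then pins down $\delta = \Theta(\sqrt{(d\log n)/k})$ and completes the proof.
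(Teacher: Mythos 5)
Your proposal matches the paper's proof essentially step for step: the same doubling windows with $n_j = 2^j\delta n$ and safety margins $\eps_j = \sqrt{\delta/2^j}$, the same shrunk-capacity LP~\eqref{eq:LPAdap} whose optimal capacity-dual $\pmb{\tau}_j$ replaces the scalar order statistic, the same pick rule $v_i \geq \pmb{\tau}_j \cdot \ba_i$, and the same union bound over a discretized family of $n^{O(d)}$ candidate duals to make the Chernoff--Hoeffding failure probability $o(1/n^d)$ absorb the data-dependence of the learned dual. The one spot where you diverge slightly is the justification for the $n^{O(d)}$ discretization: the paper loosely says there are ``$n$ choices of prices in each of the $d$ dimensions,'' whereas you argue via LP structure that every optimal basic dual is pinned down by $O(d)$ tight items, giving $\binom{n+d}{d} = n^{O(d)}$ vertices of the piecewise-linear dual objective; your phrasing is the cleaner and more rigorous version of what the paper gestures at, and leads to the same count.
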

\begin{proof}(Sketch) The proof is similar to that of
  Theorem~\ref{thm:kItemsAdap}. The algorithm uses windows of
  exponentially increasing sizes and (re-)estimates the optimal duals in
  each window. Let $\delta := \sqrt{\frac{d\log n}{k}}$; we will
  motivate this choice soon. As before, let $n_j = 2^j \delta n$,
  $k_j = (k/n) n_j$, $\eps_j = \sqrt{\delta/2^j}$, and the window
  $W_j = (n_j, n_{j+1}]$.  Now, our thresholds are the $d$-dimensional
  optimal dual variables $\pmb{\tau}_j$ for the linear program:
  \begin{equation} 
    \textstyle {\max \sum_{i=1}^{n_j} v_i x_i \quad \text{s.t.} \quad
    \sum_{i=1}^{n_j} x_i \ba_i \leq (1-\eps_j)k_j \cdot \mathbf{1}
    \text{~~and~~} x_i \in [0,1].} \label{eq:LPAdap}
  \end{equation}
  Having computed $\pmb{\tau}_j$ at time $n_j$, the algorithm picks an
  item $i\in W_j$ if $v_i \geq \pmb{\tau}_j \cdot \ba_i$.  In the
  $1$-dimensional multiple-secretary case, the dual is just the value of
  the $(1-\eps_j)k_j^{th}$ largest-value item among the first $n_j$,
  matching the choice in Theorem~\ref{thm:kItemsAdap}.  In general, the
  dual $\pmb{\tau}_j$ can be thought of as the
  price-per-unit-consumption for every resource; we want to select $i$
  only if its value $v_i$ is more than the total price
  $\pmb{\tau}_j \cdot \ba_i$.

  Let us sketch why the dual vector $\pmb{\tau}_j$ is not ``too low'':
  i.e., the dual $\pmb{\tau}_j$ computed is (with high probability) such
  that the set
  $\{ \ba_i \mid \pmb{\tau}_j \cdot \ba_i \leq v_i, i \in [n] \}$ of all
  columns that satisfy the threshold $\pmb{\tau}_j$ is still feasible.
  Indeed, suppose a price vector $\pmb{\tau}$ is \emph{bad}, and using
  it as threshold on the entire set causes the usage of some resource to
  exceed $k$. If $\pmb{\tau}$ it is an optimal dual at time $n_j$, the
  usage of that same resource by the first $n_j$ items is at most
  $(1-\eps_j)k_j$ by the LP~\eqref{eq:LPAdap}.  A Chernoff-Hoeffding
  bound shows that this happens with probability at most
  $\exp(-\eps_j^2 k_j) = o(1/n^d)$, by our choice of $\delta$. Now the crucial idea is to prune
  the (infinite) set of dual vectors to at most $n^d$ by only
  considering a subset of vectors using which the algorithm makes 
  different decisions. Roughly, there are $n$ choices of prices in each
  of the $d$ dimensions, giving us $n^d$ different possible bad dual
  vectors; a union-bound now gives the proof.
\end{proof}




As mentioned above, a stronger version of this result has an additive loss
$O(\sqrt{\frac{\log d}{k}}) V^\star$. Such a result is interesting only
when $k \gg \log d$, so this is called the ``large budget''
assumption. How well can we solve packing problems without such an
assumption? 
Specifically, given a downwards-closed family $\calF \subseteq 2^{[n]}$,
suppose we want to pick a subset of items having high total value and
lying  in $\calF$. 
For the
information-theoretic question where computation is not a concern, the
best known upper bound is $\Omega(V^\star/\log^2 n)$, and  
there
are families where $\Omega(V^\star/\frac{\log n}{\log\log n})$ is not
possible (see Exercise~\ref{exer:packLowerBound}). Can we close this
gap?  Also, which downward-closed families $\calF$ admit efficient
algorithms with good guarantees?




\subsubsection{Max-Weight Matchings}
\label{sec:combAuctionsRandom}

Consider a bipartite graph on $n$ agents and $m$ items. 
Each agent $i$ has a value $v_{ij} \in \mathbb{R}_{\geq 0}$ for
item $j$. The maximum-weight matching problem is to
 find an assignment
$M:[n] \rightarrow [m] $ to maximize $\sum_{i\in [n]} v_{iM(i)}$
such that no item $j$ is assigned to more than
one agent, i.e., $|M^{-1}(j)| \leq 1$ for all $j \in [m]$. 
In the online setting, which
has applications to allocating advertisements, the $m$ items are given
up-front and the $n$ agents arrive one by one. Upon arriving, agent $i$
reveals their valuations $v_{ij}$ for $j \in [m]$, whereupon we may irrevocably
allocate one of the remaining items to $i$. 
Let $V^\star$ denote the value of the optimal matching. The case of
$m=1$ with a single item is exactly the single-item secretary problem.

\ignore{
Consider a bipartite graph on $n$ agents and $m$ items. Each agent
$i \in [n]$ has a \emph{valuation} function
$f_i: 2^{[m]} \rightarrow \mathbb{R}_{\geq 0}$ for subsets of items. The
\emph{welfare maximization} problem is to partition the items into $n$
sets where agent $i$ is allocated set $S_i$ of items; the goal is to
maximize the total \emph{welfare} $\sum_i f_i(S_i)$. (See the notes for
connections to combinatorial auctions.)  In the online setting, which
has applications to allocating advertisements, the $m$ items are given
up-front and the $n$ agents arrive one by one. Upon arriving, agent $i$
reveals their valuation function $f_i$, whereupon we need to irrevocably
allocate a subset $S_i$ of the remaining items to them. Let $V^\star$ denote the optimal welfare. The case of
$m=1$ with a single item is exactly the single-item secretary problem.

For simplicity, we study the practically interesting and computationally
tractable case of \emph{unit-demand} functions, where each agent's value
for a set $S$ is given by their value for the best single item in
$S$. I.e., there are ``edge'' values $v_{ij} \in \mathbb{R}_{\geq 0}$
for $i\in [n]$ and $j\in [m]$ such that $f_i(S) = \max_{j\in S}
v_{ij}$. In this case, we only assign singleton sets to players,
and the ideal solution is a max-value matching in the bipartite graph.
}

 
The main algorithmic technique in this section almost seems na\"{\i}ve
at first glance: after ignoring the few first arrivals, we make each
subsequent decision based on an optimal solution of the arrivals until
that point, \emph{ignoring} all past decisions. For the matching
problem, this idea translates to the following:

\begin{algo}
Ignore the first $n/e$ agents. When agent $i \in (n/e,n]$
arrives:
\begin{enumerate}
\item Compute a max-value matching $M^{(i)}$ for the first $i$
  arrivals (ignoring past decisions).
\item If $M^{(i)}$ matches the current agent $i$ to item $j$, and if $j$
  is still available, then allocate $j$ to agent $i$; else, give
  nothing to agent $i$.
\end{enumerate}
\end{algo}
 (We assume that the matching $M^{(i)}$ depends only on the
identities of the first $i$ requests, and is independent of their
arrival order.)
We show the power of
this idea by proving that it gives optimal competitiveness for
matchings.
\begin{theorem}
\label{thm:MWM-basic}
The algorithm gives a matching with expected value at least $V^\star/e$.
\end{theorem}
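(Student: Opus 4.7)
The plan is to analyze the expected contribution of each arrival time $t > n/e$ and sum, in the spirit of Dynkin's analysis of the classical secretary problem (Theorem~\ref{thm:secy}). Fix $t \in (n/e, n]$, and condition on the unordered set $A_t \subseteq [n]$ of the first $t$ arriving agents. Because the arrival order is a uniformly random permutation of $[n]$, conditional on $A_t$ the agent $i_t$ at position $t$ is uniform in $A_t$, while the permutation of the first $t-1$ positions is independently a uniform permutation of $A_t \setminus \{i_t\}$. Crucially, by the assumption stated in the theorem, $M^{(t)}$ depends only on the \emph{set} $A_t$, so conditioning on $A_t$ determines the matching $M^{(t)}$.

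\textbf{Step 1 (expected proposed value).} Let $Y_t := v_{i_t, M^{(t)}(i_t)}$ be the value the algorithm would collect at time $t$ if the proposed item $j := M^{(t)}(i_t)$ were still available (set $Y_t := 0$ if $i_t$ is unmatched in $M^{(t)}$). Averaging over the uniform $i_t \in A_t$,
\[
\E[Y_t \mid A_t] \;=\; \frac{1}{t} \sum_{i \in A_t} v_{i, M^{(t)}(i)} \;=\; \frac{\mathrm{val}(M^{(t)})}{t}.
\]
The restriction of $M^\star$ to the agents in $A_t$ is a feasible matching on $A_t$ whose expected value (over the uniform random $A_t$) equals $(t/n)\, V^\star$; optimality of $M^{(t)}$ on $A_t$ gives $\E[\mathrm{val}(M^{(t)})] \geq (t/n)\, V^\star$, and hence $\E[Y_t] \geq V^\star/n$.

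\textbf{Step 2 (probability of availability).} I want to show that, conditional on $A_t$ and $i_t = i$, the target item $j = M^{(t)}(i)$ is still unallocated at time $t$ with probability at least $(n/e)/t$. The item $j$ is unavailable only if some earlier position $s \in (n/e, t-1]$ successfully allocated it, i.e., the agent arriving at $s$ satisfied $M^{(s)}(i_s) = j$ with $j$ still available at time $s$. Since each $M^{(s)}$ is determined by $A_s$ alone (not the order within $A_s$), and the restricted permutation on $A_t \setminus \{i\}$ is uniform, one sets up a backward induction on $t$ that reduces the survival event to a classical secretary-style ``rank of the winner'' calculation analogous to the proof of Theorem~\ref{thm:secy}, yielding the bound $(n/e)/t$.

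\textbf{Step 3 (combine and sum).} Combining the two steps,
\[
\E[\text{gain at time } t] \;\geq\; \frac{n/e}{t^2} \cdot \E[\mathrm{val}(M^{(t)})] \;\geq\; \frac{n/e}{t^2} \cdot \frac{t}{n}\, V^\star \;=\; \frac{V^\star}{e\, t}.
\]
Summing over $t \in (\lceil n/e \rceil, n]$,
\[
\E[\text{algorithm's value}] \;\geq\; \sum_{t = \lceil n/e \rceil + 1}^{n} \frac{V^\star}{e\, t} \;\approx\; \frac{V^\star}{e} \ln\!\left(\frac{n}{n/e}\right) \;=\; \frac{V^\star}{e}.
\]

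The main obstacle is Step 2. The difficulty is that $Y_t$ and the availability of $j$ are correlated random variables, both derived from the same random permutation and the algorithm's past decisions, so a naive product bound is unjustified. The fix is exactly the clever conditioning used above: conditioning on the unordered set $A_t$ freezes $M^{(t)}$ entirely (order-independent), reducing the remaining analysis to a uniform random permutation of $A_t$, on which a classical secretary-style argument on the first-success time for the item $j$ yields the $(n/e)/t$ survival bound. Making this induction rigorous, while handling the recursive dependence of $M^{(s)}$ on smaller random sets $A_s$ for $s<t$, is the technical heart of the proof.
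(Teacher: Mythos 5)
Your overall structure matches the paper's proof closely: condition on the unordered set $A_t$ (which freezes $M^{(t)}$ since the algorithm is assumed order-independent), show the proposed value has expectation $\mathrm{val}(M^{(t)})/t \geq V^\star/n$ (your Step~1, correct), show the proposed item $j$ is still free with probability about $n/(et)$ uniformly over the conditioning (your Step~2), and multiply and sum (your Step~3, arithmetically consistent with Step~1 and Step~2). Steps~1 and~3 are fine.

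The genuine gap is Step~2. You assert the survival bound $(n/e)/t$ and gesture at a ``backward induction'' and a ``rank of the winner'' calculation, but you never actually produce the argument, and you yourself flag it as the unresolved technical heart. The paper's argument here is concrete and worth spelling out, because it is \emph{not} the single-item secretary calculation from Theorem~\ref{thm:secy}, and a naive attempt to reduce to it fails (the item $j$ is contested by a different matching $M^{(k)}$, computed on a different random subset, at each earlier step $k$). The paper instead observes: fix any $k$ with $n/e < k < t$ and condition on the set $A_k$ of the first $k$ arrivals, which determines $M^{(k)}$. The item $j$ can be taken at step $k$ only if the specific agent $a := (M^{(k)})^{-1}(j)$ (if $j$ is matched at all in $M^{(k)}$) arrives exactly at position $k$; since the $k$th arrival is uniform in $A_k$, this has probability at most $1/k$, \emph{independent of the arrival order of the first $k-1$ agents}. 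This independence is what lets the per-step failure probabilities multiply across positions: conditioning on $j$ surviving steps $k+1, \dots, t-1$ does not increase the probability of $j$ being taken at step $s < k$, which remains at most $1/s$. Hence $\Pr[\,j \text{ free at time } t \mid A_t, i_t\,] \geq \prod_{n/e < k < t}(1 - 1/k) \approx (n/e)/t$. This per-position union-over-opponents argument, together with the exchangeability that makes the bounds compose multiplicatively, is exactly the missing content in your Step~2; without it the proof is incomplete.
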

\begin{proof}
  There are two insights into the proof. The first is that the matching
  $M^{(i)}$ is on a random subset of $i$ of the $n$ requests, and so has
  an expected value at least $(i/n) V^\star$. The $i^{th}$
  agent is a random one of these and so gets expected value $V^\star/n$.

  The second idea is to show, like in Theorem~\ref{thm:secy}, that if
  agent $i$ is matched to item $j$ in $M^{(i)}$, then $j$ is
  free with probability $\frac{n}{e i}$. Indeed, condition on the set,
  but not the order, of first $i$ agents (which fixes $M^{(i)}$) and
  the identity of the $i^{th}$ agent (which fixes $j$). Now for any
  $k \in (n/e, i)$, the item $j$ was allocated to the $k^{th}$ agent
  with probability at most $\frac{1}{k}$ (because even if $j$ is matched
  in $M^{(k)}$, the probability of the corresponding agent being the
  $k^{th}$ agent is at most $1/k$). The arrival order of the first $k-1$
  agents is irrelevant for this event, so we can do this 
  argument for all   $s<k$: the probability 
   $j$ was allocated to the $s^{th}$ agent, conditioned on $j$ not
   being allocated to the $k^{th}$ agent,
	is at most $\frac{1}{s}$. 
  So the probability that $j$ is available for agent $i$ is at least
  $\prod_{n/e<k<i} \big( 1 - \frac{1}{k}\big) \approx \frac{n}{e i}$.
  Combining these two ideas and using linearity of expectation, the
  expected total matching value is at least
$       \sum_{i=1+n/e}^n \big( \nicefrac{n}{e i} \cdot \nicefrac{V^\star}{n} \big) \approx \nicefrac{V^\star}{e}. 
$
  \end{proof}

  This approach can be extended to combinatorial auctions where
  each agent $i$ has a submodular (or an XOS) valuation $v_i$
  and can be assigned a subset $S_i \subseteq [m]$ of items;
  the goal is to maximize total welfare $\sum_i v_i(S_i)$. 
  \ignore{    submodular and XOS valuation
  functions {(see notes)}: such functions have a property that
  after fixing any allocation $S_1\cup\ldots \cup S_n$, there exist
  natural \emph{linear} ``support'' prices $\bf{v}_i$ that satisfy
  $f_i(S_i) = \sum_{j\in S_i} v_{ij}$, which essentially reduces the
  analysis to singleton sets.  }
  Also, this approach of following the
  current solution (ignoring past decisions) extends to solving
  packing LPs: the algorithm solves a slightly scaled-down version of the current LP 
  at each step $i$ and sets the variable $x_i$
  according to the obtained solution.





\section{Minimization Problems}
\label{sec:comb}

We now study  \emph{minimization} problems in the RO
model. In these problems the goal is to {minimize} some notion of cost
(e.g., the length of augmenting paths, or the number of bins) subject to
fulfilling some requirements. All the algorithms in this section are
order-adaptive. We use $OPT$ to denote both the optimal solution on the
instance $S$ and its cost.


\subsection{Minimizing Augmentations in Online Matching}
\label{sec:matching}

We start with one reason why  the RO model  might help for online discrete
minimization problems. Consider a problem to be ``well-behaved'' if
there is always a solution of cost $\approx OPT$ to serve the remaining
requests. This is clearly true at the beginning of the input sequence,
and we want it to remain true over time---i.e., poor choices in the past
should not cause the optimal solution on the remaining requests to
become much more expensive. Moreover, suppose that the problem cost is
``additive'' over the requests. Then satisfying the next request, which
by the RO property is a random one of $i$ remaining requests, should
cost $\approx OPT/i$ in expectation. Summing over all $n$ requests gives
an expected cost of
$\approx OPT(\frac1n + \frac{1}{n-1} + \ldots + \frac12 + 1) = O(\log
n)\, OPT$. (This general idea is reminiscent of that for max-weight
matchings from \S\ref{sec:orderAdap}, albeit in a
minimization setting.)


To illustrate this idea, we consider an online bipartite matching problem.
Let $G = (U, V, E)$ be a bipartite graph with $|U|=|V|=n$. Initially
the algorithm does not know the edge set $E$ of the graph, and hence 
the initial matching $M_0 = \emptyset$. At each time step
$t \in [1, n]$, all edges incident to the $t^{th}$ vertex
$u_t \in U$ are revealed. If the previous matching $M_{t-1}$ is no
longer a maximum matching among the current set of edges, the algorithm
must perform an augmentation to obtain a maximum matching $M_t$. We do
not want the matchings to change too drastically, so we define the cost
incurred by the algorithm at step $t$ to be the length of the augmenting
path $M_{t-1} \triangle M_t$. The goal is to minimize the total cost of
the algorithm. (For simplicity, assume $G$ has a perfect
matching and $OPT = n$, so we need to augment at each step.)
A natural algorithm  is \emph{shortest augmenting path} (see Figure~\ref{fig:SAP}):

\begin{algo}
  When a request $u_t \in U$ arrives:
  \begin{enumerate}
  \item Augment along a shortest alternating path $P_t$ from $u_t$ to some
    unmatched vertex in $V$.
  \end{enumerate}
\end{algo}

  \begin{figure}[t]
    \begin{center}
      \includegraphics[width=0.6\textwidth]{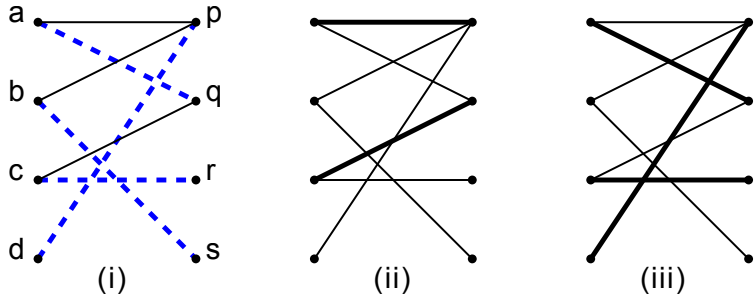}
      \caption{(i) The graph $G$ with a perfect matching shown by dashed edges. (ii) An
        intermediate matching $M_2$. (iii) The matching $M_3$ after the
        next request arrives at $d$.}\label{fig:SAP}
    \end{center}
  \end{figure}


\begin{theorem}
\label{thm:SAP}
  The shortest augmenting path algorithm incurs in total  $O(n \log n)$
  augmentation cost  in expectation in the RO model.
\end{theorem}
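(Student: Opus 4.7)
The plan is to establish a per-step expected bound $\E[|P_t| \mid \text{hist}_{t-1}] \le \frac{2n}{n-t+1}$ and then sum over $t$ by linearity of expectation: $\sum_{t=1}^n \frac{2n}{n-t+1} = 2nH_n = O(n\log n)$. This mirrors exactly the intuition in the paragraph preceding the theorem, namely that the cost of satisfying the $t$-th request should be $\approx OPT/(n-t+1) = n/(n-t+1)$ in expectation.

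For the per-step bound, I would fix an arbitrary perfect matching $M^\star$ of $G$ (guaranteed by assumption) and condition on the full history through time $t-1$. Two ingredients then combine. First, a structural claim: since shortest-augmenting-path maintains a maximum matching on the revealed subgraph and $G$ has a perfect matching, $|M_{t-1}| = t-1$, so every vertex in $S_{t-1} := \{u_1,\dots,u_{t-1}\}$ is matched, and the $M_{t-1}$-unmatched vertices in $U$ are exactly $U\setminus S_{t-1}$, a set of size $n-t+1$. The symmetric difference $M_{t-1}\triangle M^\star$ decomposes into vertex-disjoint alternating paths and cycles, of which exactly $|M^\star|-|M_{t-1}| = n-t+1$ are $M_{t-1}$-augmenting paths; these are in bijection with $U\setminus S_{t-1}$ via their $U$-endpoints. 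Calling them $\{p_u\}_{u\in U\setminus S_{t-1}}$, vertex-disjointness inside a $2n$-vertex graph gives $\sum_u |p_u|\le 2n$. Second, the random-order property: conditioned on the history, $u_t$ is uniformly distributed over $U\setminus S_{t-1}$, so
\[
\E\bigl[|P_t|\,\big|\,\text{hist}_{t-1}\bigr] \;\le\; \E\bigl[|p_{u_t}|\,\big|\,\text{hist}_{t-1}\bigr] \;=\; \frac{1}{n-t+1}\sum_{u\in U\setminus S_{t-1}}|p_u| \;\le\; \frac{2n}{n-t+1},
\]
where the first inequality uses that $P_t$ is the \emph{shortest} augmenting path starting at $u_t$ and $p_{u_t}$ is merely one candidate.

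The main subtlety, rather than a deep obstacle, is that $M_{t-1}$ depends on the \emph{order} of the first $t-1$ arrivals and not merely on the set $S_{t-1}$, so one must condition on the full history rather than only on $S_{t-1}$. What rescues the argument is that the structural claim (bijection of augmenting paths with $U\setminus S_{t-1}$ and total length $\le 2n$) holds for every possible $M_{t-1}$ the history may produce, and the uniformity of $u_t$ over $U\setminus S_{t-1}$ is preserved under this conditioning. Thus the only genuinely RO-based step is the uniformity of $u_t$; everything else is standard matching-theory bookkeeping.
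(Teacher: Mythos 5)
Your proposal is correct and follows essentially the same approach as the paper's proof: fix an optimal perfect matching, decompose $M_{t-1}\triangle M^\star$ into vertex-disjoint augmenting paths in bijection with the unrevealed $U$-vertices of total length at most $2n$, use uniformity of $u_t$ to get expected cost at most $2n/(n-t+1)$, and sum the harmonic series. The only difference is that you spell out the conditioning on the full history (rather than just the revealed set) more explicitly than the paper does, which is a minor but welcome clarification.
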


\begin{proof}
  Fix an optimal matching $M^*$, and consider some time step during the
  algorithm's execution.  Suppose the current maximum matching $M$ has
  size $n-k$. As a thought experiment, if all the remaining $k$ vertices
  in $U$ are revealed at once, the symmetric difference $M^* \triangle
  M$ forms $k$ node-disjoint alternating paths from these remaining $k$
  vertices to unmatched nodes in $V$.  Augmenting along these paths
  would gives us the optimal matching. The sum of lengths of these paths
  is at most $|M^*|+|M| \leq 2n$. (Observe that the cost of the optimal
  solution on the remaining requests does increase over time, but only
  by a constant factor.) Now, since the next vertex is chosen uniformly at
  random, its augmenting path in the above collection---and hence the
  \emph{shortest} augmenting path from this vertex---has expected length
  at most $2n/k$. Now summing over all $k$ from $n$ down to
  $1$ gives a total expected cost of $2n \big(\frac1n + \frac{1}{n-1}
  + \ldots + \frac12 + 1 \big) = 2nH_n = O(n \log n)$, hence 
  proving the theorem.
\end{proof}



This $O(\log n)$-competitiveness guarantee happens to be tight for this
matching problem in the RO model; see Exercise~\ref{exer:minAug}.

\subsection{Bin Packing}
\label{sec:binpacking}

In the classical bin-packing problem (which you may recall from Chapter~8), the request sequence consists of
items of sizes $s_1, s_2, \ldots, s_n$; these items need to be packed into
bins of unit capacity. (We assume each $s_i \leq 1$.) The goal is to
minimize the number of bins used.  One popular algorithm is \textsc{Best
  Fit}:
  
\begin{algo}
  When the next item (with size $s_t$) arrives:
  \begin{enumerate}
  \item If the item does not fit in any currently used bin, put it in a
    new bin. Else,
  \item Put into a bin where the resulting empty space is minimized (i.e.,
    where it fits ``best'').
  \end{enumerate}
\end{algo}


\textsc{Best Fit} uses no more than $2\,OPT$ bins in the worst
case. Indeed, the sum of item sizes in any two of the bins is strictly
more than $1$, else we would never have started the later of these
bins. Hence we use at most $\lceil 2 \sum_i s_i \rceil$ bins, whereas $OPT$ must be at
least $\lceil \sum_i s_i \rceil$, since each bin holds at most unit total size.  A more sophisticated analysis
shows that \textsc{Best Fit} uses $1.7\, OPT + O(1)$ bins on any request
sequence, and this multiplicative factor of $1.7$ is the best possible.

The examples showing the lower bound of $1.7$ are intricate, but a
lower bound of $\nicefrac32$ is much easier to show, and also
illustrates why \textsc{Best Fit} does better in the RO model. Consider
$n/2$ items of size $\nicefrac12^- := \nicefrac12 - \eps$, followed by
$n/2$ items of size $\nicefrac12^+ := \nicefrac12 + \eps$. 
While the optimal solution uses $n/2$
bins, \textsc{Best Fit} uses $3n/4$ bins on this adversarial sequence,
since it pairs up the $\nicefrac12^-$ items with each other, and then
has to use one bin per $\nicefrac12^+$ item. On the other hand, in the
RO case, the imbalance between the two kinds of items behaves very
similar to a symmetric random walk on the integers. (It is
exactly such a random walk, but conditioned on starting and ending at
the origin). The number of $\nicefrac12^+$ items that occupy a bin by
themselves can now be bounded in terms of the maximum deviation from the
origin (see Exercise~\ref{ex:bin-p}), which is $O(\sqrt{n \log n}) =
o(OPT)$ with high probability. Hence on this instance \textsc{Best Fit} uses only
$(1+o(1))\, OPT$ bins in the RO model, compared to $1.5\, OPT$ in the
adversarial order. On general instances, we get:


\begin{theorem}
\label{thm:BP}
  The \textsc{Best Fit} algorithm uses at most $(1.5 + o(1))\, OPT$ bins
  in the RO setting.
\end{theorem}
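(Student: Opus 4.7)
The starting point is the classical worst-case analysis: at most one Best Fit bin ever has content $\leq \nicefrac12$ (otherwise the later such bin's opening item would have fit in the earlier one), which directly yields $BF \leq 2\,OPT + O(1)$ via $OPT \geq \sum_i s_i$. To sharpen this to $\nicefrac{3}{2}\,OPT + o(OPT)$, let $W$ be the number of ``wasteful'' Best Fit bins whose final content lies in $(\nicefrac12, \nicefrac23]$; bounding the remaining ``full'' bins (content $> \nicefrac23$) then gives
\[
OPT \;\geq\; \sum_i s_i \;>\; \tfrac{1}{2}\,W + \tfrac{2}{3}\,(BF - W - 1),
\]
which rearranges to $BF \leq \nicefrac{3}{2}\,OPT + W/4 + O(1)$. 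The plan therefore reduces to showing that $W = o(OPT)$ with high probability under a uniformly random arrival order.

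A wasteful bin falls into one of a few combinatorial types: a single ``mid-sized'' item in $(\nicefrac12, \nicefrac23]$ whose leftover hole was never filled, two ``medium'' items in $(\nicefrac13, \nicefrac12]$ whose sizes sum to at most $\nicefrac23$, or a medium item whose partner slot stayed empty. To bound each type, I would partition items into $O(\log n)$ geometric size buckets, and for each bucket of ``holes'' paired with each compatible bucket of ``potential fillers'', couple the Best Fit dynamics to a discrete stochastic process tracking the signed imbalance between arrivals of holes and fillers. Because the arrival order is uniformly random, this imbalance is a symmetric process centered at zero, and a martingale concentration bound for sampling without replacement (or an exchangeability-based McDiarmid inequality) shows that its maximum deviation over $n$ steps is $O(\sqrt{n \log n})$ with high probability --- exactly the random-walk bound sketched for the $\nicefrac12 \pm \eps$ instance just before the theorem statement.

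Summing over the $O(\log^2 n)$ pairs of buckets via a union bound, the total count of unmatched holes is $O(\sqrt{n \log n} \cdot \log^2 n) = o(n) \leq o(OPT)$, so $W = o(OPT)$ and the theorem follows. The main obstacle is that Best Fit's placements couple all size buckets together through the shared pool of open bins, so reducing to a per-pair random-walk analysis requires an exchange or ``greedy-dominance'' argument showing that the cross-bucket interactions contribute at most $o(OPT)$ extra wasteful bins relative to the decoupled model. This technical step is the heart of the proof, and is precisely where random-order arrivals save the factor of $\nicefrac{4}{3}$ relative to the adversarial analysis.
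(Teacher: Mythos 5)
Your reduction has a concrete flaw: the claim that $W = o(OPT)$ (with $W$ the number of Best Fit bins with final content in $(\nicefrac12, \nicefrac23]$) is simply false. Consider $n$ items all of size $\nicefrac35$: Best Fit opens one bin per item, $OPT = BF = n$, and every bin has content $\nicefrac35 \in (\nicefrac12,\nicefrac23]$, so $W = OPT$. Your inequality $BF \leq \nicefrac32\,OPT + W/4 + O(1)$ is still true there, but only because it is slack --- the $W/4$ term is genuinely $\Theta(OPT)$. The theorem is true on that instance for the trivial reason that Best Fit is optimal, not because $W$ is small. So bounding $W$ cannot be the right intermediate quantity; what one would need is something like the excess waste of Best Fit over the waste that OPT itself is forced to incur, and your inequality does not isolate that. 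Beyond this, you yourself flag the decoupling of size buckets as ``the heart of the proof'' and leave it as an obstacle; nothing in the proposal actually addresses it, and since Best Fit routes all items through one shared pool of open bins, it is far from clear that a per-bucket random-walk process captures the dynamics.

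The paper's proof (following Kenyon) takes a quite different and cleaner route. Its first ingredient is a scaling lemma that uses the random order and concentration of measure on the sub-instance level: for any $t$, the optimal number of bins for the first $t$ arrivals is $\approx \frac{t}{n}\,OPT$ with high probability, so $OPT(\text{prefix}) + OPT(\text{suffix}) = OPT + o(OPT)$ for any split point. The second ingredient is a deterministic choice of split point $t^\star$: the last time an item of size $\leq \nicefrac13$ is placed into a bin with at least $\nicefrac12$ empty space. By Best Fit's greediness, up to time $t^\star$ all but one bin are $\geq \nicefrac23$ full, and after $t^\star$ every new bin holds either one item $> \nicefrac12$ or two items $> \nicefrac13$; hence Best Fit is $1.5$-competitive against the prefix optimum and against the suffix optimum separately, and the scaling lemma stitches these together. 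Notice this avoids any attempt to directly count wasteful bins or to decouple size classes --- the concentration argument is applied to offline optima of random sub-multisets, a much more tractable object than the Best Fit process itself, and the random-walk picture appears only in the motivating $\nicefrac12 \pm \eps$ example, not in the general proof.
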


\onlyforshort{The key insight in the proof of this result is a
  ``scaling'' result, saying that any $\eps n$-length subsequence of the
  input has an optimal value $\eps OPT$, plus lower-order terms. The
  proof uses the random order property and
  concentration-of-measure. Observe that the worst-case example does not
  satisfy this scaling property: the second half of that instance has
  optimal value $n/2$, the same as for the entire instance.  (Such a
  scaling property is often the crucial difference between the
  worst-case and RO moels: e.g., we used this in the algorithm for
  packing LPs in \S\ref{sec:LPsRandOrder}.)

 }

\saveforlong{
\begin{proof}(Sketch) 
The proof of this result has two insights. The first is a ``scaling'' result
which uses the random order property and concentration-of-measure.  It
says that for any $t$, if we look at the sub-instance given by the first
(or the last) $t$ requests, the optimal bin-packing solution on that
sub-instance uses about $\frac{t}{n} OPT$ bins, plus or minus
$O(\sqrt{OPT \log OPT})$ with high probability. Observe such a claim is
not true for adversarial instances (e.g., the one with $\nicefrac12^-$
and $\nicefrac12^+$-sized items).

Hence, for any choice of $t^*$ if we consider the two subinstances given
by requests in $\{1, \ldots, t^*\}$ and $\{t^*+1, \ldots, n\}$, the
optimum values on this prefix and suffix sum to $OPT$ plus lower order
terms, by the scaling result above. If we choose a time $t^*$ such that
the algorithm is $1.5$-competitive on both the subinstances, we'd be
done. 

The second insight is this choice of $t^*$. Indeed, let $t^*$ be
the last timestep when (a)~the arriving item of size $s_{t^*}$ satisfies
$s_{t^*} \leq \nicefrac13$, and also (b)~the algorithm places this item
into a bin with at least $\nicefrac12$ empty space before getting this
item. If there is no such time, set $t^* = 0$.
By the properties of \textsc{Best Fit}, every other bin (except the one
getting item $t^*$) must be full to at least
$1 - s_{t^*} \geq \nicefrac23$---so the algorithm is $1.5$-competitive
on the subinstance $\{1,\ldots,t^*\}$. Moreover, any bins opened after
this time (except maybe the last one) will contain two items of size
$> \nicefrac13$ or one item of size $> \nicefrac12$, and it is easy to
argue now that \textsc{Best Fit} is $1.5$-competitive for these bins.
\end{proof}
}

The exact performance of \textsc{Best Fit} in RO model remains
unresolved: the best known lower bound  uses
$1.07\, OPT$ bins. Can we close this gap? Also, can we analyze other
common heuristics in this model? E.g., \textsc{First Fit} places the
next request in the bin that was started the earliest, and can
accommodate the item. Exercise~\ref{exer:binPacking} asks you to
show that the \textsc{Next Fit} heuristic does not
benefit from the random ordering, and has a competitive ratio of $2$ in
both the adversarial and RO models.


\subsection{Facility Location}
\label{sec:facility-location}

A slightly different algorithmic intuition is used for the \emph{online
  facility location} problem, which is related to the $k$-means and
$k$-median clustering problems. 
In this problem, we are given a metric space $(V,d)$ with point set $V$, and distances
$d: V\times V \to \mathbb{R}_{\geq 0}$ satisfying the triangle
inequality.  Let $f \geq 0$ be the cost of opening a facility; the
algorithm can be extended to cases where different locations have
different facility costs. Each
request is specified by a point in the metric space, and let
$R_t = \{r_1, \ldots, r_t\}$ be the (multi)-set of request points that
arrive by time $t$. A solution at time $t$ is a set $F_t \sse V$ of
``facilities'' whose opening cost is $f\cdot |F_t|$, and whose
connection cost is the sum of distances from every request to its
closest facility in $F_t$, i.e.,
$\sum_{j \in R_t} \min_{i \in F_t} d(j,i)$. An open facility remains
open forever, so we require $F_{t-1} \sse F_t$. We want the algorithm's
total cost (i.e., the opening plus connection costs) at time $t$ to be
at most a constant times the optimal total cost for $R_t$ in the RO
model. Such a result is impossible in the adversarial arrival model,
where a tight $\Theta(\frac{\log n}{\log \log n})$ worst-case
competitiveness is known.

There is a tension between the two components of the cost: opening more
facilities increases the opening cost, but reduces the connection cost. 
Also, when request $r_t$ arrives,
if its distance to its closest facility in $F_{t-1}$ is more than $f$,
it is definitely better (in a greedy sense) to open a new facility at
$r_t$ and pay the opening cost of $f$, than to pay the connection cost
more than $f$. This suggests the following algorithm:

\begin{algo}
  When a request $r_t$ arrives:
  \begin{enumerate}
  \item Let $d_t := \min_{i \in F_{t-1}} d(r_t, i)$ be its distance to
    the closest facility in $F_{t-1}$. 
  \item Set $F_t \gets F_{t-1} \cup \{r_t\}$ with probability
    $p_t := \min\{1, d_t/f\}$, and $F_t \gets F_{t-1}$ otherwise.
  \end{enumerate}
\end{algo}

Observe that the choice of $p_t$ approximately balances the expected
opening cost $p_t \cdot f \leq d_t$ with the expected connection cost
$(1-p_t) d_t \leq d_t$. Moreover, since the set of facilities increases
over time, a request may be reassigned to a closer facility later in the
algorithm; however, the analysis works even assuming the request $r_t$
is permanently assigned to its closest facility in $F_t$.

\begin{theorem}
  \label{thm:FACLOC}
  The above algorithm is $O(1)$-competitive in the RO model. 
\end{theorem}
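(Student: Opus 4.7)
My first step will be to show that the algorithm's per-step expected cost is at most $2 g_t$ where $g_t := \min(d_t, f)$. If $d_t \geq f$ the algorithm opens with certainty and pays $f \leq d_t$; if $d_t < f$ the expected opening cost is $p_t f = d_t$ and the expected connection cost is $(1-p_t) d_t \leq d_t$, totalling $\leq 2 d_t$. Hence the algorithm's total expected cost is at most $2\,\E[\sum_t g_t]$, and it suffices to prove $\E[\sum_t g_t] = O(OPT)$. To do so, I will fix an optimal facility set $F^\star$, partition requests into clusters $\{C\}$ by nearest facility, and write $c_j := d(j, i^\star(j))$, so that $OPT = \sum_{C} \big( f + \sum_{j \in C} c_j \big)$. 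The task then reduces to proving, for each cluster $C$, the per-cluster bound
\[
  \E\Big[\,\sum_{t\,:\, r_t \in C} g_t\,\Big] \;=\; O\Big(\, f + \sum_{j \in C} c_j \,\Big).
\]

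\textbf{Per-cluster analysis via random order.} The crucial geometric fact is the triangle inequality through the OPT center: $d(j, j') \leq c_j + c_{j'}$ for all $j, j' \in C$. So once the algorithm has opened a facility at any earlier-arrived $j' \in C$, every later $r_t \in C$ sees $d_t \leq c_{r_t} + c_{j'}$. The plan is to combine this with the RO property---which makes the arrival order of $C$'s members a uniformly random permutation---to argue that a ``cheap'' in-cluster opening happens early enough in the cluster's arrival order that almost every subsequent arrival $r_t \in C$ contributes only $O(c_{r_t})$ in expectation, with the pre-opening arrivals costing $O(f)$ in total. Concretely, I will bucket $C$ into $O(\log m)$ geometric levels by $c_j$-value and argue level-by-level using the random permutation; crucially, in the regime where $d_t \geq f$ the algorithm opens with probability one, which limits how long the ``no-opening-yet'' phase inside a level can last before a first in-cluster facility is placed, giving geometric decay that sums to $O(f + \sum_{j \in C} c_j)$ rather than an $O(\log m)$-factor loss.

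\textbf{Main obstacle.} The chief technical obstacle is that $p_{t'} = \min(d_{t'}/f,1)$ depends on the full facility history across \emph{all} clusters, creating cross-cluster dependencies. My workaround is to upper-bound the cluster's cost using only in-cluster openings for the triangle-inequality step above: outside facilities only decrease $d_t$, so replacing $d_t$ by the distance to the nearest in-cluster opened facility (capped at $f$) is a valid upper bound on $g_t$. The subtle point is that outside openings also suppress the opening probabilities \emph{inside} $C$, which could delay the first in-cluster opening; a coupling/monotonicity argument will be needed to verify that the ``self-contained'' in-cluster process still opens early enough for the Step-3 analysis to go through. Once the per-cluster bound is established, summing over clusters yields $\E[\sum_t g_t] = O(OPT)$, and combining with the first paragraph gives the $O(1)$-competitiveness.
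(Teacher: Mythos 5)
Your setup matches the paper's: the per-step bound of $\leq 2 g_t$ with $g_t = \min(d_t, f)$, the decomposition of requests into clusters by nearest optimal facility, the per-cluster target bound $O(f + \sum_{j \in C} c_j)$, and the use of the triangle inequality through the optimal center are all exactly right and the same as the paper. The point of genuine divergence is the core combinatorial argument that harnesses the random order, and this is precisely the part you leave as a plan rather than a proof. The paper does not bucket by geometric $c_j$-levels; instead it sorts the cluster's requests by $c_j$, calls the smaller half ``close'' and the larger half ``distant'' (so that by Markov the close requests have $c_j \leq 2 \bar{c}$ where $\bar{c}$ is the cluster's average connection cost), and then charges each distant request to the close request that immediately precedes it in the random arrival order. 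Because $|C|=|D|$ and the order is uniform, each close request absorbs at most one distant charge in expectation, yielding $\E[\sum_{j \in D} d_j] \leq \E[\sum_{j' \in C} d_{j'}] + O(f + \sum_j c_j)$. The remaining cost of close requests before the first in-cluster opening telescopes to $\leq f$ (pay $d_j$, stop with probability $\min(d_j/f,1)$), and afterwards each close request pays $\leq c_j + 2\bar{c}$. These three pieces sum to $O(f + \sum_{j\in C} c_j)$.

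Your proposed geometric bucketing route is not obviously sound as stated. The claim that ``$d_t \geq f$ forces opening w.p.~$1$, limiting the no-opening phase within a level'' confuses $d_t$ (distance to nearest currently open facility, which evolves with the history and may be tiny because of an opening in another cluster) with $c_j$ (the static distance to the OPT center that defines the level). Nothing forces $d_t$ to be large for level-$\ell$ requests early in the sequence, so the asserted geometric decay is unjustified, and a naive level-by-level argument risks exactly the $O(\log)$ loss you are trying to avoid. You also do not explain where, in the bucketed argument, the uniformity of the arrival order actually bites; in the paper's analysis the random order is used once and decisively, in the interleaving of close and distant requests. Your instinct to worry about cross-cluster dependencies is reasonable, but note that the paper's telescoping step dodges it gracefully: when an outside facility shrinks $d_t$, it simultaneously shrinks the per-step cost and the stopping probability, so the telescoping bound $1 - \prod(1-d_j/f) \leq 1$ still applies without any coupling machinery. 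I would rewrite the heart of your argument to use the close/distant split and preceding-element charging; the bucketing plan needs substantially more to become a proof.
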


The insight behind the proof is a charging argument that first
classifies each request as ``easy'' (if they are close to a facility in
the optimal solution, and hence cheap) or ``difficult'' (if they are far
from their facility). There are an equal number of each type, and the
random permutation ensures that easy and difficult requests are roughly
interleaved. This way, each difficult request can be paired with its
preceding easy one, and this pairing can be used to bound their cost.

\saveforlong{
\begin{proof}(Sketch)
  For simplicity assume that $f = 1$, and all distances are at most $1$,
  so that $\min \{1, d_t/f\} = d_t$.  Consider some facility $i^*$ in
  the optimal solution, and let $S$ be the set of request indices served
  by it, i.e., those for which the closest facility is $i^*$.  Let
  $d^* := \frac{1}{|S|} \sum_{j \in S} d(r_j,i^*)$ be the distance from
  an average request point in $S$ to the facility~$i^*$. The optimal
  cost to open facility $i^*$ and serve these requests is
  \[ OPT_S := \textstyle f + \sum_{j \in S} d(r_j,i^*) = 1 + |S| d^*. \]
  Observe that in our algorithm, each request point $r_t$ pays $f=1$
  with probability $\min(1,d_t/f) = d_t$, and $d_t$ otherwise. Summing
  these two, the expected cost for request $r_t$ is at most $2d_t$, and
  hence it suffices to show that the algorithm's cost for requests in $S$, 
  which is $\sum_{j \in S} d_j$, is at most $O(1) \cdot OPT_S$.

  Sort all requests in $S$ in order of their distance from the facility
  $i^*$, and call the first half the ``close'' requests $C$, and the
  second half the ``distant'' requests $D$.  The close requests are at
  distance at most $2d^*$ from the facility by Markov's
  inequality. Note that once we open a facility $j$ in $C$, each of the
  subsequent requests $j'$ pays at most
  $d_{j'} \leq d(j',j) \leq d(j',i^*) + d(j,i^*) \leq d(j',i^*) + 2d^*$,
  which sum to at most $OPT_S + 2d^*|S| \leq 3OPT_S$. Hence, we need to
  bound the cost incurred by requests that arrive before any facility is
  opened inside $S$.

  We first ``charge'' the distant requests to the close requests.  Indeed,
  consider the (random) arrival ordering: each distant request $j \in D$
  charges to its preceding close request $j' \in C$ in this ordering (if
  any). Since $j$ can use the same facility as $j'$, the distance
  $d_j \leq d_{j'} + d(r_j, r_{j'}) \leq d_{j'} + d(r_j, i^*) +
  d(r_{j'}, i^*) $. Now by the random ordering and the fact that
  $|C| = |D|$, each close request $j' \in C$ only has a single distant
  request $j \in D$ charging to it in expectation, so summing gives that
  $\E[\sum_{j \in D} d_j] \leq \E[\sum_{j \in C} d_j] + O(OPT_S)$.  (The
  distant requests that come before all close requests have to be
  charged separately as below; we skip the details.)
  
  Finally, it remains to bound the expected cost for the close requests
  before we open any close facilities: for each request $j$, we pay
  $d_j$ and the process stops (since we open a facility) with
  probability $d_j$, else we continue. A simple calculation shows that
  the total expected cost is at most $1$.
\end{proof}

To summarize: we used the random ordering to create a matching between
the distant (``difficult'') and the close (``easy'') requests, and charged
the former to the latter. The latter we bounded by the triangle
inequality, and the fact that we open facilities with probability
proportional to how much we pay at each step.
}








\newcommand{\calD}{\mathcal{D}}
\section{Related Models and Extensions}
\label{sec:others}

There have been  other online models related to RO arrival.
Broadly, these models can be classified either as ``adding more
randomness'' to the  RO model by making further stochastic assumptions on
the arrivals, or as ``limiting randomness'' where the arrival sequence
need not be uniformly random. The former lets us exploit the
increased stochasticity to design algorithms with better performance
guarantees; the latter help us quantify the robustness of the
algorithms, and the limitations of
the RO model. 


\subsection{Adding More Randomness} \label{sec:addRandom}

The RO model is at least as general as the
\emph{i.i.d.\ model}, which assumes a probability distribution $\calD$
over potential requests where each request is an independent draw from
the distribution $\calD$. Hence, all the above RO results immediately
translate to the i.i.d.\ model. Is the converse also true---i.e., can we
obtain identical algorithmic results in the two models? The next case
study answers this question negatively, and then illustrates how to use
the knowledge of the underlying distribution to perform better.

\subsubsection{Steiner Tree in the RO model}
\label{sec:steiner-tree}

In the online Steiner tree problem, we are given a metric space $(V,d)$, which can be
thought of as a complete graph with edge weights $d(u,v)$; each request
is a vertex in $V$. Let $R_t = \{r_1, \ldots, r_t\}$ be the
(multi)-set of request vertices that arrive by time~$t$. When a 
request $r_t$ arrives, the algorithm must pick some edges $E_t \subseteq
\binom{V}{2}$, so that edges $E_1 \cup \cdots \cup E_t$ picked till
now connect $R_t$ into a single component. The cost of each edge
$\{u,v\}$ is its length $d(u,v)$, and the goal is to minimize the total cost.

The first algorithm we may imagine is the greedy algorithm, which picks a
single edge connecting $r_t$ to the closest previous request in
$R_{t-1}$. This greedy algorithm is $O(\log n)$-competitive for any
request sequence of length $n$ in the worst-case. Surprisingly, there
are lower bounds of $\Omega(\log n)$, not just for adversarial arrivals,
but also for the RO model. Let us discuss how the logarithmic
lower bound for the adversarial model translates to one for 
the RO model.

There are two properties of the Steiner tree problem that make this
transformation possible. The first is that duplicating requests does not
change the cost of the Steiner tree, but making many copies of a request
makes it likely that one of these copies will appear early in the RO
sequence. Hence, if we take a fixed request sequence $\sigma$, duplicate
the $i^{th}$ request $C^{n-i}$ times (for some large $C > 1$), apply a uniform random
permutation, and remove all but the first copy of each original request,
the result looks close to $\sigma$ with high probability. 
Of course, the sequence length increases from $n$
to $\approx C^n$, and hence the lower bound goes from 
being logarithmic to being doubly
logarithmic in the sequence length.

We now use a second property of Steiner tree: the worst-case examples
consist of $n$ requests that can be given in $\log n$ batches, with
the $i^{th}$ batch containing $\approx 2^i$ requests---it turns out that
giving so much information in parallel does not help the algorithm. Since we
don't care about the relative ordering within the batches, we can
duplicate the requests in the $i^{th}$ \emph{batch} $C^i$
times, thereby making the resulting request sequences of length $\leq
C^{1+\log n}$. It is now easy to set $C = n$ to get a
lower bound of $\Omega(\frac{\log n}{\log \log n})$, but a careful
 analysis allows us to set $C$ to be a constant, and get an
$\Omega(\log n)$ lower bound for the RO setting.

\subsubsection{Steiner Tree in the i.i.d. model}
\label{sec:steiner-tree-iid}

Given this lower bound for the RO model, what if we make stronger
assumptions about the randomness in the input?  What if the arrivals are
i.i.d.\ draws from a probability distribution?  We now have to make an
important distinction, whether the distribution is known to the
algorithm or not. The lower bound of the previous section can easily be
extended to the case where arrivals are from an \emph{unknown}
distribution, so our only hope for a positive result is to
consider the i.i.d.\ model with \emph{known} distributions. In other
words, 
each  request is a random vertex of the graph, where
vertex $v$ is requested with a known probability $p_v \geq 0$ (and
$\sum_v p_v = 1$). Let $\mathbf{p} = (p_1, p_2, \ldots, p_{|V|})$ be the
vector of these probabilities. For simplicity, assume  that we know the
length $n$ of the request sequence.
The \emph{augmented greedy} algorithm is the following:

\begin{algo}
  \begin{enumerate}
  \item Let $A$ be the (multi-)set of $n-1$ i.i.d.\ samples from the
     distribution $\bf{p}$, plus the first request $r_1$. 
   \item Build a minimum spanning tree $T$ connecting all the vertices
     in $A$.
     
   \item For each subsequent request $r_i$ (for $i \geq 2$):
connect $r_i$ to the closest vertex in $A \cup R_{i-1}$ using a
       direct edge.
  \end{enumerate}
\end{algo}
Note that our algorithm requires minimal knowledge of the underlying
distribution: it merely takes a set of samples that are
stochastically identical to the actual request sequence, and builds the
``anticipatory'' tree connecting this sample. Now the hope is that the
real requests will look similar to the samples, and hence will have
close-by vertices to which they can connect.

\begin{theorem}
\label{thm:Steiner-iid}
  The augmented greedy algorithm is $4$-competitive for the Steiner tree problem in
  the setting of i.i.d.\ requests with known distributions.
\end{theorem}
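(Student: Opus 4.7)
The plan is to split the algorithm's cost as
\[
\text{ALG}\;=\;MST(A)\;+\;\sum_{i=2}^{n} d\!\left(r_i,\,A\cup R_{i-1}\right),
\]
and to bound each piece in expectation by a constant multiple of $OPT$. The first piece is the ``offline'' cost of building the anticipatory skeleton, while the second is the ``online'' cost of attaching actual requests as they arrive.

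For the MST piece I would argue $\E[MST(A)] \leq 2\,OPT$. Since $r_1$ is itself an i.i.d.\ draw from $\mathbf{p}$, the scaffold $A$ is distributionally identical to the request set $R_n$ (both are multisets of $n$ i.i.d.\ samples from $\mathbf{p}$). Hence $\E[MST(A)] = \E[MST(R_n)]$, and the classical metric inequality $MST(S) \leq 2\,\mathrm{Steiner}(S)$ (double the Steiner tree into a closed tour and shortcut onto the terminals) applied pointwise to $R_n$ finishes this step.

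For the connection piece, first relax $d(r_i, A\cup R_{i-1}) \leq d(r_i, A)$ to eliminate dependence on previous requests. The crux is an exchangeability argument: for each fixed $i$, the multiset $B_i := A \cup \{r_i\}$ consists of $n+1$ i.i.d.\ draws from $\mathbf{p}$ and hence its elements are exchangeable. Combined with the ``edge-doubling'' inequality
\[
\sum_{b\in B_i} d\!\left(b,\,B_i\setminus\{b\}\right) \;\leq\; 2\,MST(B_i)
\]
(each vertex's nearest-neighbor distance is at most the length of its shortest MST-incident edge, and each MST edge is shared by only two vertices), this gives $\E[d(r_i, A)] \leq \tfrac{2}{n+1}\,\E[MST(B_i)]$. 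Summing over $i$ and then bootstrapping via $MST(S\cup\{x\}) \leq MST(S) + d(x,S)$ to relate $\E[MST_{n+1}]$ to $\E[MST_n]\leq 2\,OPT$ bounds the expected connection cost by a constant times $OPT$.

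The main obstacle I anticipate is extracting exactly the constant $4$ rather than a worse constant. A naive sum of the two bounds above loses factors of $2$ in each step and yields a bound closer to $6\,OPT$. Tightening to $4\,OPT$ should require a \emph{joint} accounting---viewing the algorithm's output as a single Steiner tree on $R_n$ with Steiner vertices drawn from $A$, and charging its cost directly against $2\,MST(A\cup R_n)$, which is in turn controlled using that $A$ and $R_n$ are i.i.d.\ and share the vertex $r_1$---rather than the separate accounting of the two pieces.
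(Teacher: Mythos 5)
Your decomposition of the algorithm's cost and the $\E[MST(A)]\leq 2\,OPT$ bound match the paper exactly. The gap is in the connection-cost piece, and you have correctly diagnosed that your argument as written only yields $\approx 6\,OPT$: the nearest-neighbor-sum lemma $\sum_{b\in B}d(b,B\setminus\{b\})\leq 2\,MST(B)$ loses a factor of $2$ because every MST edge is shared between two endpoints, and you acknowledge you do not have the ``joint accounting'' needed to recover $4$.

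The fix in the paper is a rooting trick that eliminates that factor of $2$ and works with $T$ on $A$ directly rather than with $B_i=A\cup\{r_i\}$. Root $T$ at $r_1$ and define the \emph{share} of each $a_t$ ($t\geq 2$) to be the length of the single edge from $a_t$ toward the root. This assigns each edge of $T$ to exactly one non-root vertex, so $\sum_{t\geq 2}\text{share}(a_t)=\cost(T)$ \emph{exactly}, with no doubling. Then for each $t$ one relaxes $d(r_t,A\cup R_{t-1})\leq d(r_t,A\setminus\{a_t\})$ and couples $r_t$ with $a_t$: conditioned on $A\setminus\{a_t\}$, the variables $r_t$ and $a_t$ are i.i.d.\ draws from $\mathbf{p}$, so $\E[d(r_t,A\setminus\{a_t\})]=\E[d(a_t,A\setminus\{a_t\})]\leq\E[\text{share}(a_t)]$, the last inequality because $a_t$'s parent in $T$ lies in $A\setminus\{a_t\}$. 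Summing gives $\sum_{t\geq 2}\E[d(r_t,A)]\leq\E[\cost(T)]\leq 2\,OPT$, and hence $4\,OPT$ in total. Your exchangeability observation is morally the same move, but the paper swaps $r_t$ for the specific sample $a_t$ (not a uniformly random element of $B_i$) and uses the oriented parent edge rather than the undirected nearest-neighbor edge; both changes are needed to avoid the extra constant.
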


\begin{proof}
  Since set $A$ is drawn from the same distribution as the actual
  request sequence $R_n$, the expected optimal Steiner tree on $A$ also  
  costs $OPT$. The minimum spanning tree  to connect up $A$  
  is known to give a
  $2$-approximate Steiner tree
  (Exercise~\ref{exer:steinerProphet}), so the expected cost
  for $T$ is $2OPT$. 

  Next, we need to bound the expected cost of connecting $r_t$ to the
  previous tree for $t \geq 2$. Let the samples in $A$ be called $a_2,
  a_3, \ldots, a_n$. Root the tree $T$ at $r_1$, and let the ``share''
  of $a_t$ from $T$ be the cost of the first edge on the path from $a_t$
  to the root. The sum of shares equals the cost of $T$.  Now, the cost
  to connect $r_t$ is at most the expected minimum distance from $r_t$
  to a vertex in $A \setminus \{a_t\}$. But $r_t$ and $a_t$ are from the
  same distribution, so this expected minimum distance is bounded by the
  expected distance from $a_t$ to its closest neighbor in $T$, which is
  at most the expected share of $a_t$. Summing, the connection costs for
  the $r_t$ requests is at most the expected cost of $T$, i.e., at most
  $2OPT$. This completes the proof.
\end{proof}

The proof above extends to the setting where different requests are
drawn from different known distributions; see Exercise~\ref{exer:steinerProphet}.

\subsection{Reducing the Randomness}
\label{sec:redRandom}

Do we need the order of items to be uniformly random, or can weaker
assumptions suffice for the problems we care about? This question was
partially addressed in \S\ref{sec:order-obliv-algor} where we saw
order-oblivious algorithms. Recall: these algorithms assume a
less-demanding arrival model, where a random fraction of the adversarial
input set $S$ is revealed to the algorithm in a first phase, and the
remaining input set arrives in an adversarial order in the second
phase. We now discuss some other models that have been proposed to
reduce the amount of randomness required from the input. While some
remarkable results have been obtained in these directions, there is
still much
to  explore.

\subsubsection{Entropy of the random arrival order}
\label{sec:entr-rand-arriv}

One principled way of quantifying the randomness is to measure the
entropy of the input sequences: a uniformly random permutation on $n$
items has entropy $\log (n!) = O(n \log n)$, whereas order-oblivious
algorithms (where each item is put randomly in either phase) require at most $\log {{n}\choose{n/2}} \leq n$ bits of entropy. Are
there arrival order distributions with even less entropy for which we
can give good algorithms?
 
This line of research was initiated by \cite{KKN}, who showed the
existence of  
arrival-order distributions with entropy only $O(\log\log n)$ that allow
$e$-competitive algorithms for the single-item secretary problem
(and also for some simple multiple-item problems). Moreover, they showed
tightness of their results---for any arrival distribution with
$o(\log\log n)$ entropy 
no online algorithm can be $O(1)$-competitive.  This work also 
defines a notion of
``almost $k$-wise uniformity'', which requires that the induced
distribution on every subset of $k$ items be close to uniform. They
show that this property and its variants suffice for some of
the  algorithms, but not for all.

A different perspective is the following: since the performance analysis
for an algorithm in the RO model depends only on certain randomness
properties of the input sequence (which are implied by the random
ordering), it may be meaningful in some cases to (empirically) verify
these specific properties on the actual input stream. E.g., \cite{BCG}
used this approach to explain the experimental efficacy of
their algorithm computing personalized pageranks in the RO model.
 
\subsubsection{Robustness and the RO Model}
\label{sec:robust-algorithms}

The RO model assumes that the adversary first chooses all the item
values, and then the arrival order is perturbed at random according to
some specified process. While this is a very appealing framework, one
concern is that the algorithms may over-fit the model. What if, as in
some other semi-random models, the adversary gets to make a small number of
changes \emph{after} the randomness has been added? Alternatively, what
if some parts of the input must remain in adversarial order, and the
remainder is randomly permuted?  For instance, say the adversary is
allowed to specify a single item which must arrive at some specific
position in the input sequence, or it is allowed to change the position
of a single item after the randomness has been added. Most current
algorithms fail when faced with such modest changes to the model. E.g.,
the 37\%-algorithm picks nothing if the adversary presents a large item
in the beginning. Of course, these algorithms were not designed to be
robust: but can we get analogous results  even if
the input sequence is slightly corrupted?


One approach is to give ``best-of-both-worlds'' algorithms that achieve
a good performance when the input is randomly permuted, and which also
have a good worst-case performance in all cases. For instance,
\cite{MirrokniOZ} and \cite{Raghvendra} give such results for online ad
allocation and min-cost matching, respectively. Since the secretary
problem has poor performance in the worst case, we may want
more refined guarantees, that the performance
degrades with the amount of corruption. Here is a different semi-random
model for the multiple-secretary problem from \S\ref{sec:k-items}. In
the \emph{Byzantine} model, the adversary not only chooses the values of
all $n$ items, it also chooses the relative or absolute
order of some $\eps n$ of these items. The remaining $(1-\eps)n$
``good'' items are then randomly permuted within the remaining
positions.  The goal is now to compare to the top $k$ items among only
these good items. Some preliminary results are known for this
model~\citep{BradacGSZ-arXiv19}, but many questions remain open. In
general, getting robust algorithms for secretary problems, or other
optimization problems considered in this chapter, remains an important
direction to explore.

\subsection{Extending Random-Order Algorithms to Other Models}

Algorithms for the RO model can help in designing good algorithms for
similar models. One such example is for the \emph{prophet} model, which
is closely
related to the optimal stopping problem in \S1.1.4 of
Chapter~8. In this model, we are given $n$ independent prize-value distributions $D_1,
\ldots, D_n$, and then presented with draws $v_i \sim D_i$ from these
distributions in \emph{adversarial} order. The threshold rule from
Chapter~8, which picks the first prize with value above some threshold
computable from just the distributions, gets expected value at least
$\frac12 \E[\max_i v_i]$. Note the differences between the prophet and
RO models: the prophet model assumes more about
the values---namely, that they are drawn from
the given distributions---but less about the ordering, since the items
can be presented in an adversarial order. Interestingly,
order-oblivious algorithms in the RO model can be used to get algorithms
in the prophet model.

Indeed, suppose we only have limited access to the distributions $D_i$
in the prophet model: we can only get information about them by drawing
a few samples from each distribution.  (Clearly we need at least one
sample from the distributions, else we would be back in the online
adversarial model.) Can we use these samples to get algorithms in this
limited-access prophet model for some packing constraint family $\calF$?
The next theorem shows 
we can convert order-oblivious algorithms for the RO model
to this setting using only a single
sample from each distribution.
 
\begin{theorem} \label{thm:orderOblivToProph} Given an
  $\alpha$-competitive order-oblivious online algorithm for a packing
  problem $\calF$, there exists an $\alpha$-competitive algorithm for
  the corresponding prophet model with unknown probability
  distributions, assuming we have access to one (independent) sample
  from each distribution.
\end{theorem}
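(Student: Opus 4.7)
The plan is to simulate the order-oblivious RO algorithm $A$ by coupling its ``first phase'' with the up-front samples and its ``second phase'' with the adversarial arrivals. The critical observation is that for each $i$ the sample $s_i$ and the actual arrival $v_i$ are i.i.d.\ draws from $D_i$; hence for any random subset $T \subseteq [n]$ chosen independently of both, the hybrid sequence that uses $s_i$ when $i \in T$ and $v_i$ when $i \notin T$ is distributionally identical to $(v_1, \ldots, v_n)$. This lets us manufacture a bona-fide RO instance on the fly, while only ever committing to genuine arrivals.

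Concretely, suppose $A$ operates with first-phase size $m$. In the prophet model, our algorithm (i)~draws a uniformly random $T \subseteq [n]$ of size $m$, (ii)~reveals $\{s_i : i \in T\}$ to $A$ as phase~1 in a uniformly random order, and (iii)~as each $v_i$ arrives from the adversary, ignores it if $i \in T$ and feeds it to $A$ as the next phase-2 item if $i \notin T$. Whenever $A$ elects to pick an item in phase~2, we pick the corresponding $v_i$ in the prophet stream; feasibility with respect to $\calF$ transfers automatically since $A$'s selections are $\calF$-feasible.

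For the analysis, define a virtual ground set $U = \{u_1, \ldots, u_n\}$ via $u_i := s_i$ for $i \in T$ and $u_i := v_i$ for $i \notin T$. Since $s_i$ and $v_i$ are i.i.d.\ from $D_i$ and $T$ is independent of all samples and arrivals, conditioning on $T$ shows that $u_1, \ldots, u_n$ are mutually independent with $u_i \sim D_i$; thus $(u_i)$ has the same joint law as $(v_i)$, and in particular $\E[\max_{F \in \calF} \sum_{i \in F} u_i] = OPT$. With respect to $U$, phase~1 of $A$ is exactly the uniformly random subset of size $m$ indexed by $T$, and phase~2 consists of the remaining items in the order inherited from the prophet adversary. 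By order-obliviousness of $A$, its expected value is at least $OPT/\alpha$. Since $A$ never picks in phase~1, every item it picks corresponds to a genuine $v_i$ that has been offered to us in the prophet stream, so our algorithm realises the same expected value.

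The step I expect to require the most care is verifying that the ordering $A$ sees in phase~2 is a legitimate adversarial ordering from $A$'s point of view. Here one uses that the prophet adversary commits to the ordering of $v_1, \ldots, v_n$ without knowledge of our internal coins $T$; consequently the order restricted to $[n] \setminus T$ is independent of phase~1's content and constitutes a worst-case ordering that $A$ is designed to handle. Beyond this, the argument is purely distributional—it invokes no concentration or approximation, uses no knowledge of $D_i$, and consumes only a single sample from each distribution—so one genuinely obtains an $\alpha$-competitive algorithm for the single-sample prophet model.
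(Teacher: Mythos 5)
Your construction is exactly the one the paper sketches (the paper leaves the details as Exercise~\ref{exer:orderOblivToProph}): choose a random size-$m$ index set $T$, feed the samples $\{s_i : i \in T\}$ to $A$ as phase~1, and stream the actual arrivals $\{v_i : i \notin T\}$ to $A$ as phase~2. Your coupling argument---defining the virtual ground set $U$ via $u_i = s_i$ for $i \in T$ and $u_i = v_i$ otherwise, observing that $(T,U)$ are independent with $U \sim \prod_i D_i$, and then invoking the order-oblivious guarantee conditionally on $U$---correctly supplies the missing details, so this is a correct proof in the spirit intended by the text.
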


The  idea is to choose a random subset of items to be
presented to the order-oblivious algorithm in the first phase; for these
items we send in the sampled values available to us, and for the
remaining items we use their values among the actual arrivals. The details
are left as Exercise~\ref{exer:orderOblivToProph}.

\section{Notes}
\label{sec:notes}

The classical secretary problem and its variants have long been studied
in optimal stopping theory; see~\cite{Ferguson-Journal89} for a
historical survey. In computer science, the RO model has been used, e.g., for computational geometry problems,
to get fast and elegant 
algorithms for problems like convex hulls and linear
programming; see \cite{Seidel} for a survey in the
context of the \emph{backwards analysis} technique.
%
The secretary problem 
has gained broader attention 
due to connections to strategyproof
mechanism design for online auctions 
\citep{HajiaghayiKP04,kleinberg}.  
Theorem~\ref{thm:secyTight} is due to~\cite{GilbertMosteller}.

\S\ref{sec:order-obliv-algor}: The notion of an order-oblivious
algorithm was first defined by \cite{AKW}. The order-oblivious
multiple-secretary algorithm is folklore. The matroid secretary
problem was proposed by~\cite{BIK07}; 
Theorem~\ref{thm:maxWtdForestLog}
is an adaptation of their
$O(\log r)$-competitive algorithm for general matroids. Theorem~\ref{thm:maxWtdForestO1},
due to \cite{KorulaPal-ICALP09}, extends to a
${2e}$-competitive order-adaptive algorithm. The current best algorithm is
$4$-competitive~\citep{SotoTV-SODA18}. The only lower bound known is the 
factor of $e$ from Theorems~\ref{thm:secy} and~\ref{thm:secyTight}, even for arbitrary matroids, whereas the best 
algorithm for general matroids has competitive ratio $O(\log \log
\text{rank})$~\citep{Lachish14,FeldmanSZ15}. See the survey
by~\cite{Dinitz13} for work leading up to these results.

\S\ref{sec:orderAdap}: The order-adaptive algorithms for
multiple-secretary (Theorem~\ref{thm:kItemsAdap}) and for packing LPs
(Theorem~\ref{thm:LPs}) are based on the work of \cite{AWY14}. The
former result can be improved to give 
$(1 - O(\sqrt{\nicefrac{1}{k}}))$-competitiveness 
\citep{kleinberg}.
Extending work on the AdWords problem (see the monograph
by~\cite{Mehta}), \cite{DevanurHayes09} studied packing
LPs in the RO model. 
The optimal results have 
$(1 - O(\sqrt{(\log d_{\text{nnz}})/k}))$-competitiveness \citep{KRTV13},
where $d_{\text{nnz}}$ is the maximum number of non-zeros in any column;
these are based on the solve-ignoring-past-decisions approach we used
for max-value matchings.
\cite{Rubinstein-STOC16} and \cite{RS-SODA17} gave 
$O(\poly\log n)$-competitive algorithms for general packing problems
for subadditive functions.
Theorem~\ref{thm:MWM-basic} 
(and extensions to combinatorial auctions)  
is due to \cite{KesselheimRTV-ESA13}.

\S\ref{sec:comb}: Theorem~\ref{thm:SAP} about shortest augmenting paths is due to
\cite{ChaudhuriDKL-INFOCOM09}. A worst-case result of $O(\log^2 n)$ was
given by \cite{BernsteinHR18}; closing this gap remains an open problem.
The analysis of \textsc{Best Fit} in Theorem~\ref{thm:BP}
is by \cite{kenyon}. Theorem~\ref{thm:FACLOC} for facility
 location is by \cite{Meyerson}; see \cite{MeyersonMP01} for
other network design problems in the RO setting. The tight
nearly-logarithmic competitiveness for adversarial arrivals is due to
\cite{Fotakis08}. 


\S\ref{sec:addRandom}:
Theorem~\ref{thm:Steiner-iid} 
for Steiner tree 
is by \cite{GargGLS-SODA08}. \cite{GrandoniGLMSS13} and \cite{DehghaniEHLS17} give algorithms for set cover and $k$-server in the
i.i.d.\ or prophet models. The gap between 
the RO 
and i.i.d.\ models (with \emph{unknown} distributions) remains an interesting
direction to explore. \cite{CorreaDFS-EC19} show that the single-item
 problem has the same competitiveness in both models; can we
show similar results (or gaps) for other problems?


\S\ref{sec:redRandom}: \cite{KKN} study connections between 
entropy
and the RO model.
The RO model with corruptions was
proposed by \cite{BradacGSZ-arXiv19}, who also give
$(1-\eps)$-competitive algorithms for the multiple-secretary problem
with weak estimates on the optimal value.
A similar model for online
matching for mixed (stochastic and worst-case) arrivals was studied
by \cite{EKM-EC15}. Finally, Theorem~\ref{thm:orderOblivToProph} to
design prophet inequalities from samples is by \cite{AKW}. Connections
between the RO model and prophet inequalities have also been studied via 
 the prophet secretary model where both the value distributions are given and
 the arrivals are in a uniformly random-order~\citep{EHLM-SIDMA17,EHKS-SODA18}. 



  \subsection*{Acknowledgments}
We thank  Tim Roughgarden, C. Seshadhri, Matt Weinberg, and Uri Feige
 for their comments on an initial draft of this chapter.


{
\bibliography{chap11}
\bibliographystyle{cambridgeauthordate}
}

\section*{Exercises}

\begin{enumerate}
\item \label{exer:multItemConstant} Show that both algorithms proposed
  above Theorem~\ref{thm:kItemObiv} for the multiple secretary problem
  achieve expected value $\Omega(V^\star)$.

\item \label{exer:packLowerBound} Show why for a general packing
  constraint family $\calF$, i.e. $A \in \calF$ and $B \subseteq A$
  implies $B \in \calF$, no online algorithm has
  $o(\frac{\log n}{\log\log n})$-competitiveness. 
 [Hint: Imagine $n$
  elements in a $\sqrt{n}\times \sqrt{n}$ matrix and $\calF$
  consists of subsets of columns.]
  
\item  \label{exer:minAug}
  Consider a cycle on $2n$ vertices, and hence it has a perfect matching.
  Show that the shortest augmenting path algorithm for minimizing 
  augmentations in online matching from
  \S\ref{sec:matching} has  $\Omega(n \log n)$ cost in expectation.

\item \label{ex:bin-p} Suppose $n/2$ items of size $\nicefrac12^- := \nicefrac12 - \e$
  and $n/2$ items of size $\nicefrac12^+ := \nicefrac12 + \e$ are presented in 
  a random
  order to the \textsc{Best Fit} bin packing heuristic from
  \S\ref{sec:binpacking}. Define the imbalance $I_t$ after $t$ items to
  be the number of $\nicefrac12^+$ items minus the number of
  $\nicefrac12^-$ items. Show that the number of bins which have only a
  single item (and hence waste about $\nicefrac12$ space) is at most
  $(\max_t I_t) - (\min_t I_t)$. Use a Chernoff-Hoeffding bound to prove
  this is at most $O(\sqrt{n \log n})$ with probability $1 - 1/\poly(n)$.

\item \label{exer:binPacking} In the \textsc{Next Fit} heuristic for the
  bin packing problem from \S\ref{sec:binpacking}, the next item is
  added to the current bin if it can accommodate the item, otherwise we
  put the item into a new bin. Show that this algorithm has a competitive
  ratio of $2$ in both the adversarial and RO models.

\item \label{exer:steinerProphet} Show that for a set of requests in a
  metric space, 
the minimum spanning tree on this subset gives a $2$-approximate
solution to the Steiner tree. Also, extend the Steiner tree algorithm from
  the i.i.d.\ model to the prophet model, where the $n$ requests are 
  drawn from $n$ independent (possibly different) known distributions 
  $\mathbf{p}^t$ over the
  vertices. 
  
\ignore{\item \label{exer:singleItemFixedThresh} Show that
  Theorem~\ref{thm:singleItemIID} can be extended to the prophet
  secretary model where arrivals are from known independent
  distributions $X_1, \ldots, X_n$ in a random order by setting a fixed
  threshold $\tau$ where $\Pr[\max_i X_i >\tau] = 1-1/e$. 
  [Hint: Use
  the inequality that $\forall t,x \in [0,1]$, we have
  $\log(1-tx) \geq t\cdot \log(1-x)$.] \alert{Drop?}}

\item \label{exer:orderOblivToProph}
Prove Theorem~\ref{thm:orderOblivToProph}. 

\item \label{exer:byzSecret}
Suppose the input consists of $(1-\e)n$ good items and  $\e n$ bad items.  The adversary decides all $n$ item values, and also the locations of bad items. The good items are then randomly permuted in the remaining positions. If $v^*$ denotes the value of the largest good item, show there is no online algorithm with expected value $\Omega(v^*/(\e n))$.
[Hint: There is only one non-zero good item and all the bad items have values much smaller than $v^*$. The bad items are arranged as the  single-item adversarial arrival lower bound instance.]

\end{enumerate}


\end{document}